\documentclass[a4paper,onecolumn,11pt]{quantumarticle}
\pdfoutput=1
\usepackage[utf8]{inputenc}

\usepackage[T1, T2A]{fontenc}
\usepackage[english]{babel}
\usepackage{amsmath}
\usepackage{amssymb}
\usepackage{hyperref}
\usepackage{amsthm}
\usepackage{braket}

\usepackage{bm}
\usepackage{mathtools}

\usepackage{yhmath}
\usepackage{booktabs}
\usepackage{mathtools}
\usepackage{multirow}
\usepackage{algorithm}
\usepackage{algpseudocode}

\usepackage{tikz}
\usepackage{lipsum}
\usepackage{abraces}
\usepackage{epigraph}
\usepackage{xurl}

\theoremstyle{plain}
\newtheorem{theorem}{Theorem}[section]
\newtheorem{lemma}[theorem]{Lemma}
\newtheorem{proposition}[theorem]{Proposition}
\newtheorem{corollary}[theorem]{Corollary}
\newtheorem{remark}[theorem]{Remark}

\theoremstyle{definition}
\newtheorem{definition}[theorem]{Definition}

\begin{document}

\title{Natural Qubit Algebra: clarification of the Clifford boundary and new non-embeddability theorem }

\author{G. A. Koroteev}
\email{greg.koroteev@gmail.com}

\maketitle

\newcommand{\Cod}{\mathbb{C}} 
\newcommand{\NQA}{\mathrm{NQA}}


\section{Introduction}

Quantum computation was introduced as a physically grounded model of computation
\cite{Benioff1980} and as a framework for efficiently simulating quantum dynamics that can be
intractable for classical machines \cite{Feynman1982}. In the standard circuit model, quantum
algorithms are described as compositions of unitary gates acting on a \(2^m\)-dimensional Hilbert
space, with analysis typically phrased either in terms of state evolution or in terms of the
algebra generated by gate families \cite{NielsenChuang2010}. Because the ambient operator space
\(\mathrm{Mat}(2^m,\Cod)\) grows exponentially with \(m\), it is natural to ask which parts of
quantum advantage arise from intrinsically quantum resources (e.g.\ entanglement) and which parts
arise from \emph{operator structure} that admits concise representations and finite update rules
\cite{BengtssonZyczkowski2017}.

This paper adopts an explicitly algebraic viewpoint: instead of starting from a fixed gate set and
treating the resulting operators as dense matrices, we construct and study the algebra of
(qubit-like) operators directly. The guiding principle is familiar from exactly solvable models in
mathematical physics: the practical difficulty is often not the Hilbert-space dimension \(2^L\) per
se, but the presence or absence of structure. When a system admits a compact algebraic
description—via a small generating set and closed relations—large operator manipulations can be
reduced to finite symbolic rules rather than full matrix arithmetic. In quantum information, the
same theme appears in classical simulation results: stabilizer/Clifford structure enables efficient
simulation \cite{Gottesman1998Heisenberg,AaronsonGottesman2004}, and several approaches extend this to
near-Clifford or structured regimes by trading resources such as stabilizer rank, quasiprobability
negativity, or low-rank decompositions \cite{BravyiGosset2016,BravyiEtAl2019LowRank,Pashayan2015,BravyiSmithSmolin2016}.
Tensor-network contraction methods similarly exploit circuit geometry and limited entanglement to
compress the operator calculus \cite{Vidal2003,MarkovShi2008,GrayKourtis2021}. These works reinforce
a broad message: \emph{structure can dominate dimension}.

Our contribution is a compact real-operator calculus, which we call \emph{Natural Qubit Algebra}
(NQA). Concretely, we work with a real \(2\times 2\) block alphabet
\(\{I,X,Z,W\}\subset\mathrm{Mat}(2,\mathbb{R})\) and build \(m\)-qubit operators as Kronecker products \cite{RaptisRaptis2025GlobalMaps}
\label{RaptisRaptis2025GlobalMaps}
and short linear combinations of tensor blocks. This is a standard idea in numerical linear
algebra—Kronecker structure is ubiquitous and often decisive for fast algorithms
\cite{VanLoan2000}—but here it is used as an \emph{operator-language} in which products,
commutators, and conjugations are executed by slotwise multiplication rules and finite sign factors.
The blocks \(X\) and \(Z\) coincide with the usual Pauli matrices \cite{Pauli1927}, while the role of
the complex Pauli \(\mathbf{Y}\) is played by a real representative:
\[
I = I_2,\qquad
X = \mathbf{X},\qquad
Z = \mathbf{Z},\qquad
\mathbf{Y} = i\,W
\quad\text{(equivalently, } W = -i\,\mathbf{Y}\text{),}
\]
so that \(\{I,X,Z,W\}\) spans \(\mathrm{Mat}(2,\mathbb{R})\) and generates real Clifford-type
operator families \cite{Dirac1928,Porteous1995,Lounesto2001}. Complex gates are incorporated by a
standard realification that adds one extra ``phase lane'' and embeds \(\mathrm{Mat}(2^m,\Cod)\) into
\(\mathrm{Mat}(2^{m+1},\mathbb{R})\).

A practical motivation for introducing NQA is that the conventional Pauli--Dirac presentation,
while conceptually natural, can be \emph{representation-inefficient} for describing and composing
logical operators: it frequently pushes structure into implicit phase conventions and dense
matrix identities. NQA instead treats operator syntax (Kronecker words \cite{Raptis2019BinaryWords}, XOR composition of  \cite{Kornyak2011FiniteGroups},
and sign bicharacters) as first-class data. This makes it possible to (a) rewrite familiar quantum
operators as compact NQA ``words'', and (b) ask more sharply where the boundary of ``Cliffordness''
actually lies once representation and access model are made explicit.

The key structural feature of NQA is that its multiplication rules force a canonical graded
algebraic organization. When tensor blocks are indexed by binary codes, multiplication corresponds
to addition in \(G=(\mathbb{Z}_2)^{2m}\) with a sign determined by a bilinear pairing; this induces a
natural \(G\)-grading together with a canonical bicharacter controlling graded commutation signs.
This places NQA within the general framework of color-graded algebras and Lie superalgebras
\cite{Kac1977,Scheunert1979LNM,Scheunert1979GeneralizedLie,Varadarajan2004Supersymmetry}, and connects
to \(\mathbb{Z}_2^n\)-graded (“higher color”) perspectives \cite{RittenbergWyler1978GenSuperalgebras,Bruce2019Z2nSupersymmetry}.
In practical terms, it means that commutator/anticommutator behavior can be tracked uniformly by
finite index rules rather than by case-by-case matrix computations.

To calibrate the framework against standard ``Clifford vs.\ non-Clifford'' intuition, we work out two
canonical oracle examples side-by-side:
(i) Bernstein--Vazirani (typically viewed as Clifford-level), and
(ii) Grover search (typically viewed as genuinely non-Clifford due to rank-one reflections).
We show that both oracles admit compact structured NQA descriptions.
The essential distinction between them then appears not at the level of
operator syntax, but at the level of spectral and group-theoretic behavior.
This motivates a refinement question pursued later in the paper: can one formulate a more precise
boundary notion than ``Clifford/non-Clifford'' that incorporates (a) access model (black-box vs.\ structured),
and (b) decomposition growth vs.\ closed symbolic update rules? \cite{AaronsonGottesman2004}

For two real qubits, we also give a concrete identification of the full operator space
\(\mathrm{Mat}(4,\mathbb{R})\) with a real Clifford algebra \(\mathrm{Cl}(2,2;\mathbb{R})\), and provide an
explicit Clifford generating set in the NQA block basis. This yields a canonical and unambiguous
``true Clifford'' normal form for two-qubit real circuits as expressions in \(\mathrm{Cl}(2,2)\),
making grading and commutation signatures intrinsic rather than presentation-dependent.

Finally, we emphasize that the graded structure of NQA is not imposed artificially:
it is forced by the index-twisted multiplication law, placing the framework
naturally within the theory of color-graded and Lie superalgebras.

\section{NQA block basis and the $(I,X,Z,W)$ calculus}

\subsection{Single--qubit blocks}
We define the real $2\times 2$ blocks similar with real Pauli matrices from \cite{DehaeneDeMoor2003Clifford}: 
\begin{equation}
I:=\begin{bmatrix}1&0\\0&1\end{bmatrix},\quad
X:=\begin{bmatrix}0&1\\1&0\end{bmatrix},\quad
Z:=\begin{bmatrix}1&0\\0&-1\end{bmatrix},\quad
W:=\begin{bmatrix}0&-1\\1&0\end{bmatrix}.
\label{eq:NQA_blocks}
\end{equation}
These satisfy
\begin{equation}
X^2=Z^2=I,\qquad W^2=-I,\qquad XZ=+W,\qquad ZX=-W.
\label{eq:NQA_relations}
\end{equation}
Hence $\{I,X,Z,W\}$ is a real basis of $\mathrm{Mat}(2,\mathbb{R})$.

Every $A\in \mathrm{Mat}(2,\mathbb{R})$ admits a unique expansion
\begin{equation}
A=a_I I+a_X X+a_Z Z+a_W W,\qquad a_\bullet\in\mathbb{R}.
\label{eq:NQA_single_expansion}
\end{equation}

\subsection{Subalgebras generated by $X$ and $Z$}
The algebra generated by $\{X,Z\}$ equals $\mathrm{Mat}(2,\mathbb{R})$ because $XZ=W$
and $\{I,X,Z,W\}$ spans all $2\times 2$ real matrices.

However, three canonical \emph{proper} subalgebras are singled out:
\begin{enumerate}
\item \textbf{Diagonal subalgebra:}\quad
$\mathcal{Z}:=\mathrm{span}\{I,Z\}$ (all diagonal real matrices).
\item \textbf{Flip subalgebra:}\quad
$\mathcal{X}:=\mathrm{span}\{I,X\}$ (symmetric matrices with equal diagonal entries).
\item \textbf{Complex subalgebra:}\quad
$\mathcal{W}:=\mathrm{span}\{I,W\}\cong\mathbb{C}$ via
$aI+bW\leftrightarrow a+ib$ since $W^2=-I$.
\end{enumerate}

\begin{lemma}[Finite closure of the multiplicative $X/Z$ generators \cite{DehaeneDeMoor2003Clifford}]
The multiplicative set generated by $\{X,Z\}$ closes on
\[
\{\pm I,\ \pm X,\ \pm Z,\ \pm W\},
\]
and is stable under conjugation. In particular, $XZX=-Z$ and $ZXZ=-X$.
\end{lemma}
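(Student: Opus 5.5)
The plan is to show that the eight-element set
\[
Q:=\{\pm I,\ \pm X,\ \pm Z,\ \pm W\}
\]
contains $X$ and $Z$, is closed under multiplication, and is closed under inverses. Then $Q$ is a group containing the generators, so the multiplicative set generated by $\{X,Z\}$ lies inside $Q$. Next I will show every element of $Q$ is actually a word in $X,Z$, which gives equality. Concretely: $I=X^2$, $W=XZ$, $-W=ZX$, and $-I=W^2=XZXZ$. The remaining signed elements follow by multiplying by $-I$, since $-X=(XZXZ)X$ and so on.

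\textbf{Closure under products.} It suffices to check the $4\times 4$ table of products of the unsigned blocks $\{I,X,Z,W\}$, since signs factor out. From \eqref{eq:NQA_relations} we have $X^2=Z^2=I$, $W^2=-I$, $XZ=W$ and $ZX=-W$. The products involving $W$ I derive from these relations rather than from matrix arithmetic:
\[
XW=X(XZ)=Z,\qquad WX=XZX=X(-W)\cdot(-1)\cdots
\]
More cleanly, use associativity together with $ZX=-XZ$:
\[
WX=XZX=-X^2Z=-Z,\qquad ZW=ZXZ=-XZ^2=-X,\qquad WZ=XZ^2=X.
\]
Every entry of the table then lies in $Q$. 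Inverses also stay in $Q$: $X^{-1}=X$, $Z^{-1}=Z$, and $W^{-1}=-W$.

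\textbf{Stability under conjugation.} For $g,h\in Q$, the product $ghg^{-1}$ lies in $Q$ by closure. More precisely, $ghg^{-1}=\pm h$, because any two of $X,Z,W$ anticommute and $I$ commutes with everything. The anticommutation facts are $XZ=-ZX$ (given), $XW=Z=-WX$, and $ZW=-X=-WZ$, all from the table above. The two explicit identities then drop out immediately:
\[
XZX=X(ZX)=X(-XZ)=-Z,\qquad ZXZ=(ZX)Z=-XZ^2=-X.
\]

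\textbf{Expected difficulty.} There is no real obstacle here. The only care needed is to derive the $W$-products from the stated relations via associativity, rather than silently recomputing matrices, so that the sign conventions $XZ=+W$ and $ZX=-W$ are used consistently. As a side remark, $Q$ is then the dihedral group of order $8$ (not the quaternion group), since $X$ and $Z$ are involutions.
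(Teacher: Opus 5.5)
Your proof is correct, and it is essentially the argument the paper leaves implicit: the paper gives no proof beyond the citation to Dehaene--De Moor and the relations \eqref{eq:NQA_relations}. Your multiplication table derived from those relations by associativity, the resulting pairwise anticommutation of $X,Z,W$ (so that $ghg^{-1}=\pm h$), and the explicit words $X^2$, $XZ$, $ZX$, $XZXZ$ for $I$, $W$, $-W$, $-I$ are exactly the direct verification it relies on. Just delete the abandoned fragment $WX=XZX=X(-W)\cdot(-1)\cdots$. If you also want to cover the reading of ``conjugation'' as transposition, add that every element of $Q$ is a signed permutation matrix, so $g^\top=g^{-1}\in Q$ by your inverse check.
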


\subsection{$m$--qubit NQA basis (tensor blocks)}
For $m\ge 1$, define the Kronecker products
\begin{equation}
B_{i_1\cdots i_m}:=B_{i_1}\otimes\cdots\otimes B_{i_m},\qquad i_k\in\{I,X,Z,W\},
\label{eq:NQA_tensor_basis}
\end{equation}
where $B_I:=I$, $B_X:=X$, $B_Z:=Z$, and $B_W:=W$.
Then $\mathcal{B}_m:=\{B_{i_1\cdots i_m}\}$ has $4^m$ elements and forms a real basis of
$\mathrm{Mat}(2^m,\mathbb{R})$:
\begin{equation}
A=\sum_{i_1,\dots,i_m} a_{i_1\cdots i_m}B_{i_1\cdots i_m},
\qquad a_{i_1\cdots i_m}\in\mathbb{R}.
\label{eq:NQA_m_expansion}
\end{equation}

\subsection{Normalized Frobenius product (orthonormality of the block basis)}
On $\mathrm{Mat}(2^m,\mathbb{R})$ we use the normalized Frobenius inner product \cite{BengtssonZyczkowski2017}
\begin{equation}
\langle A,B\rangle_F := 2^{-m}\,\mathrm{tr}(A^\top B).
\label{eq:frob_def}
\end{equation}
With this normalization, the tensor blocks are orthonormal:
\begin{equation}
\big\langle B_{i_1\cdots i_m},\, B_{j_1\cdots j_m}\big\rangle_F
= \prod_{k=1}^m \delta_{i_k,j_k}.
\label{eq:frob_orthonormal}
\end{equation}

\subsection{Phase--lane realification (embedding complex gates into real matrices)}
Any complex operator $U=A+iB$ with $A,B\in\mathrm{Mat}(2^m,\mathbb{R})$ is embedded as \cite{BengtssonZyczkowski2017}
\begin{equation}
\Phi(U):=A\otimes I + B\otimes W\ \in\ \mathrm{Mat}(2^{m+1},\mathbb{R}).
\label{eq:phi_def}
\end{equation}
Equivalently, $\Phi(U)$ is the real block matrix $\begin{psmallmatrix}A&-B\\B&A\end{psmallmatrix}$.

\begin{proposition}[$\Phi$ is a real $*$--embedding]
For all complex $U,V$ of compatible size,
\[
\Phi(UV)=\Phi(U)\Phi(V),\qquad
\Phi(U^\dagger)=\Phi(U)^\top.
\]
In particular, $U$ is unitary if and only if $\Phi(U)$ is orthogonal.
\end{proposition}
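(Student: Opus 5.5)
The plan is to compute directly with the tensor form $\Phi(U)=A\otimes I+B\otimes W$, using only the mixed-product rule for Kronecker products and the single-qubit relations \eqref{eq:NQA_relations}. The key identity is $W^2=-I$, which makes $\mathcal{W}=\mathrm{span}\{I,W\}\cong\mathbb{C}$. Conceptually, $\Phi$ is the map $\mathrm{Mat}(2^m,\mathbb{C})\cong \mathrm{Mat}(2^m,\mathbb{R})\otimes_{\mathbb{R}}\mathbb{C}\to \mathrm{Mat}(2^m,\mathbb{R})\otimes\mathcal{W}$ sending $i\mapsto W$. Multiplicativity is therefore inherited from the fact that $a+ib\mapsto aI+bW$ is a real algebra isomorphism onto the commutative subalgebra $\mathcal{W}$. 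I will nonetheless carry out the verification explicitly.

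\textbf{Multiplicativity.} Write $U=A+iB$ and $V=C+iD$ with $A,B,C,D$ real, so that $UV=(AC-BD)+i(AD+BC)$. Expanding with $(P\otimes Q)(R\otimes S)=PR\otimes QS$ gives
\[
\Phi(U)\Phi(V)=AC\otimes I + AD\otimes W + BC\otimes W + BD\otimes W^2 .
\]
Since $W^2=-I$, this equals $(AC-BD)\otimes I+(AD+BC)\otimes W=\Phi(UV)$. Here $\Phi$ is clearly $\mathbb{R}$-linear, and the decomposition $U=A+iB$ is unique, so $\Phi$ is well defined.

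\textbf{Adjoint.} We have $U^\dagger=A^\top - iB^\top$, so $\Phi(U^\dagger)=A^\top\otimes I - B^\top\otimes W$. On the other hand, $(P\otimes Q)^\top=P^\top\otimes Q^\top$, $I^\top=I$ and $W^\top=-W$ (read off from \eqref{eq:NQA_blocks}). Hence $\Phi(U)^\top=A^\top\otimes I - B^\top\otimes W$, and the two expressions agree. As a sanity check, this is consistent with the block form $\begin{psmallmatrix}A&-B\\B&A\end{psmallmatrix}^\top$.

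\textbf{Unitary iff orthogonal.} The forward direction is immediate. If $U^\dagger U=UU^\dagger=\mathbf{1}$, then applying $\Phi$ and using the two identities gives $\Phi(U)^\top\Phi(U)=\Phi(\mathbf{1})=\mathbf{1}\otimes I$, which is the identity of size $2^{m+1}$, and similarly on the other side.

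The converse is the only step needing an extra ingredient, namely injectivity of $\Phi$. Suppose $\Phi(U)=0$. Then $A\otimes I+B\otimes W=0$, and by linear independence of $I$ and $W$ (equivalently, reading the block matrix entrywise) we get $A=B=0$. With injectivity in hand, suppose $\Phi(U)^\top\Phi(U)=\mathbf{1}$. Then $\Phi(U^\dagger U)=\Phi(\mathbf{1})$, so $U^\dagger U=\mathbf{1}$. For a square matrix this already implies $UU^\dagger=\mathbf{1}$. Alternatively, one can run the same argument on $\Phi(U)\Phi(U)^\top$, which equals the identity because $\Phi(U)$ is square and orthogonal.

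I expect no real obstacle. The main points of care are the sign convention $W^\top=-W$, which must match $\mathbf{Y}=iW$ so that conjugation corresponds to $W\mapsto -W$, and remembering to invoke injectivity for the ``if'' direction.
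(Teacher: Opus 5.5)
Your proof is correct and complete. The paper states this proposition without giving any proof, and your direct Kronecker-product verification is the natural argument it leaves implicit: the mixed-product rule with $W^2=-I$ gives multiplicativity, $W^\top=-W$ gives the adjoint identity, and you rightly invoke injectivity of $\Phi$ for the ``if'' direction.

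One small caveat concerns your sanity check. The matrix $A\otimes I+B\otimes W$ equals $\begin{psmallmatrix}A&-B\\B&A\end{psmallmatrix}$ only after a perfect-shuffle permutation; literally, that block matrix is $I\otimes A+W\otimes B$. This is harmless, because transposition commutes with conjugation by a permutation matrix, and nothing in your argument depends on the check.
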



Let us explicitly describe the transitions from standard (complex) quantum operations to the real
\(B\)-block framework. Throughout, bold symbols \(\mathbf{I},\mathbf{X},\mathbf{Y},\mathbf{Z}\) denote the
standard complex Pauli operators, while the block indices of \(B\)-matrices are real
\(I,X,Z,W\).

\begin{itemize}
  \item Identity (one-qubit, realified with a phase lane):
  \begin{equation}
\Phi(\mathbf{I}) = I\otimes I.
\label{eq:auto:0174}
\end{equation}

  \item Pauli \(\mathbf{X}\) (already real):
  \begin{equation}
\Phi(\mathbf{X}) = X\otimes I.
\label{eq:auto:0175}
\end{equation}

  \item Pauli \(\mathbf{Z}\) (already real):
  \begin{equation}
\Phi(\mathbf{Z}) = Z\otimes I.
\label{eq:auto:0176}
\end{equation}

  \item Hadamard (real):
  \begin{equation}
H = \tfrac{1}{\sqrt{2}}\,X + \tfrac{1}{\sqrt{2}}\,Z,
\qquad\text{hence}\qquad
\Phi(H)=H\otimes I.
\label{eq:auto:0177}
\end{equation}

  \item Real \(y\)-axis rotation (standard definition \(R_y(\theta)=e^{-i\frac{\theta}{2}\mathbf{Y}}\) becomes real because \(-i\mathbf{Y}=W\)):
  \begin{equation}
R_y(\theta) = \cos\!\tfrac{\theta}{2}\,I + \sin\!\tfrac{\theta}{2}\,W.
\label{eq:auto:0178}
\end{equation}

  \item Arbitrary real rotation (SO(2)) in the \(\{I,W\}\) plane:
  \begin{equation}
R(\phi) = \cos\phi\,I + \sin\phi\,W.
\label{eq:auto:0179}
\end{equation}

  \item Reflection across the axis at angle \(\phi\) (real):
  \begin{equation}
\mathrm{Ref}(\phi) = \cos(2\phi)\,Z + \sin(2\phi)\,X.
\label{eq:auto:0180}
\end{equation}

  \item Projectors:
  \begin{equation}
P_0 = \tfrac{1}{2}(I + Z),
\qquad
P_1 = \tfrac{1}{2}(I - Z).
\label{eq:auto:0181}
\end{equation}
\end{itemize}

\noindent
If desired, the coefficients \(a_i\) for each operator can be tabulated explicitly,
or the construction can be extended to \(m=2\) and \(m=3\) realifications.

\subsection*{Two--Qubit Real Basis (\(m=2\))}

Let's describe general form for $m=2$:
\begin{equation}
U = \sum_{i,j\in\{I,X,Z,W\}} a_{ij}\,B_{ij},
\qquad
B_{ij} := B_i \otimes B_j,
\qquad a_{ij}\in\mathbb{R}.
\label{eq:auto:0182}
\end{equation}

\subsection*{0) Elementary tensors (building blocks)}
\begin{itemize}
  \item Identity: \(I\!\otimes\! I = \boxed{B_{II}}\).
  \item Single--qubit \(X\) on qubit 1: \(\boxed{B_{XI}}\); on qubit 2: \(\boxed{B_{IX}}\).
  \item Single--qubit \(Z\) on qubit 1: \(\boxed{B_{ZI}}\); on qubit 2: \(\boxed{B_{IZ}}\).
  \item Single--qubit \(H\) on qubit 1:
  \begin{equation}
H\!\otimes\! I = \tfrac{1}{\sqrt{2}}\bigl(\boxed{B_{XI}}+\boxed{B_{ZI}}\bigr).
\label{eq:auto:0184}
\end{equation}
  On qubit 2:
  \begin{equation}
I\!\otimes\! H = \tfrac{1}{\sqrt{2}}\bigl(\boxed{B_{IX}}+\boxed{B_{IZ}}\bigr).
\label{eq:auto:0185}
\end{equation}
  \item Two--qubit \(H\!\otimes\! H\):
  \begin{equation}
H\!\otimes\! H = \tfrac{1}{2}\bigl(\boxed{B_{XX}}+\boxed{B_{XZ}}+\boxed{B_{ZX}}+\boxed{B_{ZZ}}\bigr).
\label{eq:auto:0186}
\end{equation}
\end{itemize}

\noindent
All these are manifestly real as sums of real \(B_{ij}\).

\subsection*{1) Controlled gates (real)}

\paragraph{Controlled-Z (CZ).}
\begin{equation}
\mathrm{CZ}
= P_0\!\otimes\! I + P_1\!\otimes\! Z
= \tfrac{1}{2}\bigl(\boxed{B_{II}}+\boxed{B_{IZ}}+\boxed{B_{ZI}}-\boxed{B_{ZZ}}\bigr).
\label{eq:auto:0187}
\end{equation}

\paragraph{CNOT (control = qubit 1, target = qubit 2).}
\begin{equation}
\mathrm{CNOT}_{1\to 2}
= P_0\!\otimes\! I + P_1\!\otimes\! X
= \tfrac{1}{2}\bigl(\boxed{B_{II}}+\boxed{B_{ZI}}+\boxed{B_{IX}}-\boxed{B_{ZX}}\bigr).
\label{eq:auto:0188}
\end{equation}

\paragraph{CNOT (control = qubit 2, target = qubit 1).}
\begin{equation}
\mathrm{CNOT}_{2\to 1}
= I\!\otimes\! P_0 + X\!\otimes\! P_1
= \tfrac{1}{2}\bigl(\boxed{B_{II}}+\boxed{B_{IZ}}+\boxed{B_{XI}}-\boxed{B_{XZ}}\bigr).
\label{eq:auto:0189}
\end{equation}

\subsection*{2) SWAP and parity projectors (real)}

\paragraph{SWAP.}
\begin{equation}
\mathrm{SWAP}
= \tfrac{1}{2}\bigl(\boxed{B_{II}}+\boxed{B_{XX}}-\boxed{B_{WW}}+\boxed{B_{ZZ}}\bigr).
\label{eq:auto:0190}
\end{equation}
\noindent
This is the standard identity
\(\mathrm{SWAP}=\tfrac12(\mathbf{I}\!\otimes\!\mathbf{I}+\mathbf{X}\!\otimes\!\mathbf{X}
+\mathbf{Y}\!\otimes\!\mathbf{Y}+\mathbf{Z}\!\otimes\!\mathbf{Z})\);
since \(\mathbf{Y}=iW\Rightarrow \mathbf{Y}\!\otimes\!\mathbf{Y}=-W\!\otimes\!W=-B_{WW}\),
the full sum is real.

\paragraph{\(\mathbf{Z}\!\otimes\!\mathbf{Z}\)-parity projectors.}
\begin{equation}
\Pi_{\text{even}} = \tfrac12\bigl(\boxed{B_{II}}+\boxed{B_{ZZ}}\bigr),
\qquad
\Pi_{\text{odd}}  = \tfrac12\bigl(\boxed{B_{II}}-\boxed{B_{ZZ}}\bigr).
\label{eq:auto:0191}
\end{equation}

\paragraph{Computational-basis projectors.}
\begin{equation}
\begin{aligned}
\lvert 00\rangle\!\langle 00\rvert &= \tfrac14(\boxed{B_{II}}+\boxed{B_{IZ}}+\boxed{B_{ZI}}+\boxed{B_{ZZ}}),\\[2pt]
\lvert 01\rangle\!\langle 01\rvert &= \tfrac14(\boxed{B_{II}}-\boxed{B_{IZ}}+\boxed{B_{ZI}}-\boxed{B_{ZZ}}),\\[2pt]
\lvert 10\rangle\!\langle 10\rvert &= \tfrac14(\boxed{B_{II}}+\boxed{B_{IZ}}-\boxed{B_{ZI}}-\boxed{B_{ZZ}}),\\[2pt]
\lvert 11\rangle\!\langle 11\rvert &= \tfrac14(\boxed{B_{II}}-\boxed{B_{IZ}}-\boxed{B_{ZI}}+\boxed{B_{ZZ}}).
\end{aligned}
\label{eq:auto:0192}
\end{equation}

\subsection*{3) Bell transform (real Clifford)}

One standard Bell preparation is
\begin{equation}
U_{\text{Bell}} = (\mathrm{CNOT}_{1\to 2})(H\!\otimes\!I).
\label{eq:auto:0193}
\end{equation}
As a \(B_{ij}\) sum:
\begin{equation}
\begin{aligned}
U_{\text{Bell}}
&= \Bigl[\tfrac12\bigl(\boxed{B_{II}}+\boxed{B_{ZI}}+\boxed{B_{IX}}-\boxed{B_{ZX}}\bigr)\Bigr]
\Bigl[\tfrac{1}{\sqrt{2}}\bigl(\boxed{B_{XI}}+\boxed{B_{ZI}}\bigr)\Bigr]\\[4pt]
&= \text{(multiply out to obtain a sum of \(B_{ij}\) with real coefficients)}.
\end{aligned}
\label{eq:auto:0194}
\end{equation}

\subsection*{4) Continuous real two--qubit exponentials (when the generator is real)}

If a real generator \(G\) satisfies \(G^2=-B_{II}\), then
\begin{equation}
e^{\phi G}=\cos\phi\,B_{II}+\sin\phi\,G.
\label{eq:auto:0195}
\end{equation}
For example, \(G=B_{IW}\) or \(G=B_{WI}\) (a single-qubit real rotation acting on one wire).
If instead \(G^2=+B_{II}\) (e.g.\ \(G=B_{WW}\)), then
\(e^{\phi G}=\cosh\phi\,B_{II}+\sinh\phi\,G\).

\subsection*{5) Not representable (purely real, \(m=2\))}

\begin{itemize}
  \item \(\mathrm{iSWAP}\), \(\sqrt{\mathrm{SWAP}}\) (standard complex versions): require complex phases.
  \item Generic Cartan entangler \(e^{-i(\theta_x\,\mathbf{X}\!\otimes\!\mathbf{X}+\theta_y\,\mathbf{Y}\!\otimes\!\mathbf{Y}+\theta_z\,\mathbf{Z}\!\otimes\!\mathbf{Z})}\):
  complex unless parameters collapse the imaginary parts.
  \item Any gate with a nontrivial global/relative complex phase (e.g.\ controlled-phase \(e^{i\phi}\) with \(\phi\notin\{0,\pi\}\)).
\end{itemize}

\noindent
These become representable with real coefficients by moving to \(m=3\)
(adding one ``phase lane'' via \(W\) in an extra Kronecker slot).

\subsection*{6) Quick index--only lookup table (all real)}
\begin{equation}
\begin{array}{lcl}
I\!\otimes\!I &=& \boxed{B_{II}}\\[3pt]
X\!\otimes\!I &=& \boxed{B_{XI}},\quad I\!\otimes\!X=\boxed{B_{IX}}\\[3pt]
Z\!\otimes\!I &=& \boxed{B_{ZI}},\quad I\!\otimes\!Z=\boxed{B_{IZ}}\\[3pt]
H\!\otimes\!I &=& \tfrac{1}{\sqrt2}(\boxed{B_{XI}}+\boxed{B_{ZI}}),
\quad
I\!\otimes\!H = \tfrac{1}{\sqrt2}(\boxed{B_{IX}}+\boxed{B_{IZ}})\\[3pt]
H\!\otimes\!H &=& \tfrac12(\boxed{B_{XX}}+\boxed{B_{XZ}}+\boxed{B_{ZX}}+\boxed{B_{ZZ}})\\[3pt]
\mathrm{CZ} &=& \tfrac12(\boxed{B_{II}}+\boxed{B_{IZ}}+\boxed{B_{ZI}}-\boxed{B_{ZZ}})\\[3pt]
\mathrm{CNOT}_{1\to 2} &=& \tfrac12(\boxed{B_{II}}+\boxed{B_{ZI}}+\boxed{B_{IX}}-\boxed{B_{ZX}})\\[3pt]
\mathrm{CNOT}_{2\to 1} &=& \tfrac12(\boxed{B_{II}}+\boxed{B_{IZ}}+\boxed{B_{XI}}-\boxed{B_{XZ}})\\[3pt]
\mathrm{SWAP} &=& \tfrac12(\boxed{B_{II}}+\boxed{B_{XX}}-\boxed{B_{WW}}+\boxed{B_{ZZ}})\\[3pt]
\Pi_{\text{even}} &=& \tfrac12(\boxed{B_{II}}+\boxed{B_{ZZ}}),
\quad
\Pi_{\text{odd}} = \tfrac12(\boxed{B_{II}}-\boxed{B_{ZZ}})
\end{array}
\label{eq:auto:0196}
\end{equation}

\section*{Three--Qubit Realification (\(m=3\))}

\paragraph{General form (real matrices, phase lane).}
\begin{equation}
U \;=\; \sum_{i,j,k\in\{I,X,Z,W\}} a_{ijk}\,B_{ijk},
\qquad
B_{ijk} := B_i \otimes B_j \otimes B_k,
\quad a_{ijk}\in\mathbb{R}.
\label{eq:auto:0197}
\end{equation}

\subsection*{A) Complex \texorpdfstring{\(m=2\)}{m=2} \(\Rightarrow\) Real \texorpdfstring{\(m=3\)}{m=3}: mapping rule}
Let
\begin{equation}
A=\sum_{i,j} a_{ij}\,B_{ij},\qquad
B=\sum_{i,j} b_{ij}\,B_{ij},\qquad a_{ij},b_{ij}\in\mathbb{R}.
\label{eq:auto:0199}
\end{equation}
For a complex two--qubit gate \(U=A+iB\),
\begin{equation}
\boxed{\;\Phi(U)\;=\;\sum_{i,j} a_{ij}\,B_{ijI}\;+\;\sum_{i,j} b_{ij}\,B_{ijW}\;}
\label{eq:auto:0200}
\end{equation}
i.e.\ the third tensor slot implements the complex unit via \(i\mapsto W\) on slot~3.

\subsection*{B) \(m=3\) toolbox (all real coefficients)}
Slots: \((\text{qubit 1},\text{qubit 2},\text{phase})\).
\begin{equation}
\Phi(\mathbf{I}\!\otimes\!\mathbf{I}) \;=\; B_{III}.
\label{eq:auto:0201}
\end{equation}
Real lift of any \(m=2\) real sum \(\sum c_{ij}B_{ij}\):
\begin{equation}
\sum c_{ij} B_{ij} \;\Rightarrow\; \sum c_{ij}\,B_{ijI}.
\label{eq:auto:0202}
\end{equation}
Pure ``imaginary'' lift (the \(i\)-part):
\begin{equation}
\sum d_{ij} B_{ij} \;\Rightarrow\; \sum d_{ij}\,B_{ijW}.
\label{eq:auto:0203}
\end{equation}

\subsection*{C) Complex \texorpdfstring{\(m=2\)}{m=2} gates \(\to\) real \texorpdfstring{\(m=3\)}{m=3} expansions}

\paragraph{C.1 Single--qubit phase gates.}
(Shown for action on qubit 1; for action on qubit 2 replace \(ZI\mapsto IZ\).)
\begin{equation}
S=e^{-i\frac{\pi}{4}\mathbf{Z}}
=\tfrac{1}{\sqrt2}\,B_{II}+i\!\left(-\tfrac{1}{\sqrt2}\,B_{ZI}\right)
\;\Rightarrow\;
\Phi(S)=\tfrac{1}{\sqrt2}\,B_{III}-\tfrac{1}{\sqrt2}\,B_{ZIW}.
\label{eq:auto:0205}
\end{equation}
\begin{equation}
T=e^{-i\frac{\pi}{8}\mathbf{Z}}
=(\cos\tfrac{\pi}{8})\,B_{II}+i\!\left(-\sin\tfrac{\pi}{8}\,B_{ZI}\right)
\;\Rightarrow\;
\Phi(T)=(\cos\tfrac{\pi}{8})\,B_{III}-(\sin\tfrac{\pi}{8})\,B_{ZIW}.
\label{eq:auto:0206}
\end{equation}

\paragraph{C.2 Two--qubit entanglers.}

\begin{equation}
\mathrm{iSWAP}
=\tfrac12(\mathbf{I}\!\otimes\!\mathbf{I}+\mathbf{Z}\!\otimes\!\mathbf{Z})
\;+\;
i\,\tfrac12(\mathbf{X}\!\otimes\!\mathbf{X}+\mathbf{Y}\!\otimes\!\mathbf{Y}),
\quad
A=\tfrac12(B_{II}+B_{ZZ}),\;\;B=\tfrac12(B_{XX}-B_{WW}).
\label{eq:auto:0207}
\end{equation}
\begin{equation}
\Phi(\mathrm{iSWAP})
=\tfrac12\bigl(B_{III}+B_{ZZI}\bigr)
+\tfrac12\bigl(B_{XXW}-B_{WWW}\bigr).
\label{eq:auto:0208}
\end{equation}

\paragraph{\(\sqrt{\mathrm{SWAP}}\) (exact realification via basis decomposition).}
In the \(m=2\) block basis one has the (complex) expansion
\begin{equation}
\sqrt{\mathrm{SWAP}}
=\Bigl(\tfrac34+\tfrac{i}{4}\Bigr)B_{II}
+\Bigl(\tfrac14-\tfrac{i}{4}\Bigr)B_{XX}
+\Bigl(\tfrac14-\tfrac{i}{4}\Bigr)B_{ZZ}
+\Bigl(-\tfrac14+\tfrac{i}{4}\Bigr)B_{WW}.
\label{eq:auto:0209}
\end{equation}
Applying \(\Phi\) termwise gives the \(8\times 8\) real operator
\begin{equation}
\begin{aligned}
\Phi(\sqrt{\mathrm{SWAP}})
&= \tfrac34\,B_{III}+\tfrac14\,B_{XXI}+\tfrac14\,B_{ZZI}-\tfrac14\,B_{WWI}\\
&\quad+\tfrac14\,B_{IIW}-\tfrac14\,B_{XXW}-\tfrac14\,B_{ZZW}+\tfrac14\,B_{WWW}.
\end{aligned}
\label{eq:auto:0210}
\end{equation}


\paragraph{\textbf{General Cartan entangler.}}
\begin{equation}
U(\theta_x,\theta_y,\theta_z)
=\exp\!\bigl(-i(\theta_x\,\mathbf{X}\mathbf{X}+\theta_y\,\mathbf{Y}\mathbf{Y}+\theta_z\,\mathbf{Z}\mathbf{Z})\bigr).
\label{eq:auto:0211}
\end{equation}
Since \(\mathbf{X}\mathbf{X},\mathbf{Y}\mathbf{Y},\mathbf{Z}\mathbf{Z}\) commute, we may factorize
\begin{equation}
U
=\exp(-i\theta_x\,\mathbf{X}\mathbf{X})\,
 \exp(-i\theta_y\,\mathbf{Y}\mathbf{Y})\,
 \exp(-i\theta_z\,\mathbf{Z}\mathbf{Z}).
\label{eq:auto:0212}
\end{equation}
Using \(\mathbf{Y}=iW\) (hence \(\mathbf{Y}\mathbf{Y}=-W\!\otimes\!W=-B_{WW}\)) we obtain the \(m=2\) block forms
\begin{align}
\exp(-i\theta_x\,\mathbf{X}\mathbf{X})
&=\cos\theta_x\,B_{II}+i\sin\theta_x\,B_{XX},\label{eq:auto:0213a}\\
\exp(-i\theta_y\,\mathbf{Y}\mathbf{Y})
&=\cos\theta_y\,B_{II}-i\sin\theta_y\,B_{WW},\label{eq:auto:0213b}\\
\exp(-i\theta_z\,\mathbf{Z}\mathbf{Z})
&=\cos\theta_z\,B_{II}+i\sin\theta_z\,B_{ZZ}.
\label{eq:auto:0214}
\end{align}
Applying the realification map \(\Phi(A+iB)=A\otimes I + B\otimes W\) gives the three lifted real factors
\begin{equation}
\boxed{
\begin{aligned}
\Phi\!\left(e^{-i\theta_x\,\mathbf{X}\mathbf{X}}\right)&=\cos\theta_x\,B_{III}+\sin\theta_x\,B_{XXW},\\
\Phi\!\left(e^{-i\theta_y\,\mathbf{Y}\mathbf{Y}}\right)&=\cos\theta_y\,B_{III}-\sin\theta_y\,B_{WWW},\\
\Phi\!\left(e^{-i\theta_z\,\mathbf{Z}\mathbf{Z}}\right)&=\cos\theta_z\,B_{III}+\sin\theta_z\,B_{ZZW}.
\end{aligned}}
\label{eq:auto:0215}
\end{equation}
Their product (still real) gives \(\Phi(U)\) and can be expanded explicitly as a single
\(\sum_{i,j,k} a_{ijk}\,B_{ijk}\) if needed.

\paragraph{C.3 Controlled phase \(e^{i\phi\,\lvert 11\rangle\!\langle 11\rvert}\).}
\begin{equation}
U
=I+\bigl(e^{i\phi}-1\bigr)\lvert 11\rangle\!\langle 11\rvert
=I+\bigl((\cos\phi-1)+i\sin\phi\bigr)\,\tfrac14\bigl(B_{II}-B_{IZ}-B_{ZI}+B_{ZZ}\bigr).
\label{eq:auto:0216}
\end{equation}
Hence
\begin{multline}
\Phi(U)=B_{III}
+\tfrac{\cos\phi-1}{4}\,\bigl(B_{III}-B_{IZI}-B_{ZII}+B_{ZZI}\bigr)
\\
+\tfrac{\sin\phi}{4}\,\bigl(B_{IIW}-B_{IZW}-B_{ZIW}+B_{ZZW}\bigr).
\label{eq:auto:0217}
\end{multline}

\paragraph{C.4 One--qubit \(\mathbf{Z}\) rotation (on qubit 1).}
\begin{multline}
R_Z(\theta)=e^{-i\theta \mathbf{Z}/2}
=(\cos\tfrac{\theta}{2})\,B_{II}+i\!\left(-\sin\tfrac{\theta}{2}\,B_{ZI}\right)
\\
\Rightarrow\ 
\Phi(R_Z(\theta))=(\cos\tfrac{\theta}{2})\,B_{III}-(\sin\tfrac{\theta}{2})\,B_{ZIW}.
\label{eq:auto:0218}
\end{multline}
(For \(R_X\): replace \(B_{ZI}\to B_{XI}\). Euler decompositions such as \(R_Z R_X R_Z\) become
products of three lifted real factors.)

\label{eq:auto:0219}



\section{Hidden string in a phase oracle (BV) in NQA language}
\label{sec:bv_nqa}

The Bernstein--Vazirani (BV) problem asks to identify an unknown string
\(s\in\{0,1\}^m\) given oracle access to a function \(f_s(x)=s\cdot x \pmod 2\)
(or, equivalently, to a corresponding unitary oracle) \cite{BernsteinVazirani1997,NielsenChuang2010}.
For background and oracle models in quantum computation, see also \cite{BerthiaumeBrassard1994}.

\subsection{From the standard BV oracle to a phase oracle}

In the usual query model, one is given the reversible oracle
\begin{equation}
U_{f_s}\ket{x}\ket{y}=\ket{x}\ket{y\oplus f_s(x)},
\qquad f_s(x)=s\cdot x\ (\mathrm{mod}\ 2),
\label{eq:bv_uf_def}
\end{equation}
which is standard in quantum algorithmic formulations \cite{NielsenChuang2010}.
Preparing the target qubit in \(\ket{-}=(\ket0-\ket1)/\sqrt2\) yields the phase-kickback identity
\begin{equation}
U_{f_s}\bigl(\ket{x}\ket{-}\bigr)
= (-1)^{f_s(x)}\ket{x}\ket{-},
\label{eq:phase_kickback}
\end{equation}
so that, on the data register alone, \(U_{f_s}\) induces the \emph{phase oracle}
\begin{equation}
O_s\ket{x}=(-1)^{s\cdot x}\ket{x},
\qquad x\in\{0,1\}^m.
\label{eq:phase_oracle}
\end{equation}
Thus the phase-oracle BV formulation used below is equivalent to the standard BV oracle up to
the fixed preparation of an ancilla \cite{BernsteinVazirani1997,NielsenChuang2010}.

\subsection{Quantum solution (one oracle call)}

Let \(H=\tfrac{1}{\sqrt2}(X+Z)\) be the (real) Hadamard matrix and \(H^{\otimes m}\) the
\(m\)-qubit Hadamard layer. Starting from \(\ket{0^m}\),
\[
\ket{0^m}\xrightarrow{H^{\otimes m}}
2^{-m/2}\sum_{x}\ket{x}\xrightarrow{O_s}
2^{-m/2}\sum_x (-1)^{s\cdot x}\ket{x}\xrightarrow{H^{\otimes m}}
\ket{s}.
\]
Equivalently, the amplitude of \(\ket{y}\) after the final Hadamards is
\[
2^{-m}\sum_{x\in\{0,1\}^m}(-1)^{(s\oplus y)\cdot x}
=\begin{cases}1,&y=s,\\0,&y\neq s,\end{cases}
\]
so one measurement returns \(s\) with probability \(1\) \cite{BernsteinVazirani1997,NielsenChuang2010}.

\subsection{NQA realization using only \(X,Z\) blocks}

In NQA block form,
\begin{equation}
H=\tfrac{1}{\sqrt2}(X+Z)\in \mathrm{span}\{X,Z\},
\end{equation}
and the phase oracle decomposes as a product of local \(Z\) blocks:
\begin{equation}
O_s=\prod_{k:\ s_k=1} Z_k,\qquad
Z_k:=I^{\otimes (k-1)}\otimes Z\otimes I^{\otimes (m-k)}.
\label{eq:oracle_as_Zs}
\end{equation}
Hence the entire BV circuit is expressed using only tensor products of \(I,X,Z\)
and their finite linear combinations in each slot.

\begin{proposition}[Asymptotic equivalence under structured oracle access]
Both the standard BV circuit and its NQA block realization:
\begin{enumerate}
\item use exactly one application of the phase oracle \(O_s\) (equivalently, one call to \(U_{f_s}\));
\item use two layers of \(m\) local Hadamards, hence depth \(\Theta(m)\);
\item return \(s\) with probability \(1\).
\end{enumerate}
\end{proposition}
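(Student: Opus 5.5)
The plan is to show that the NQA block realization is literally the same sequence of operators as the standard Bernstein--Vazirani circuit, only written in a different basis. All three claims then transfer from the standard analysis already given above.

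\textbf{Step 1: operator identity.} I would first check that the NQA expressions coincide, as real $2^m\times 2^m$ matrices, with the standard gates.
\begin{itemize}
\item For the Hadamard, $\tfrac{1}{\sqrt2}(X+Z)$ is entrywise the usual $\tfrac{1}{\sqrt2}\begin{bmatrix}1&1\\1&-1\end{bmatrix}$. By the Kronecker definition \eqref{eq:NQA_tensor_basis}, the layer $H^{\otimes m}$ is therefore the same matrix in both pictures.
\item For the oracle, I would verify \eqref{eq:oracle_as_Zs} on basis states. Since $Z\ket{x_k}=(-1)^{x_k}\ket{x_k}$, we get
\[
\prod_{k:s_k=1}Z_k\ket{x}=(-1)^{\sum_k s_kx_k}\ket{x}=(-1)^{s\cdot x}\ket{x}=O_s\ket{x}.
\]
The factors $Z_k$ act on distinct slots and commute, so the product is well defined.
\end{itemize}
Hence the NQA circuit $H^{\otimes m}O_sH^{\otimes m}$ equals the standard circuit as a matrix.

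\textbf{Step 2: the three claims.}
\begin{itemize}
\item For item (1), the circuit contains $O_s$ exactly once. By the phase-kickback identity \eqref{eq:phase_kickback}, this single use corresponds to one call of $U_{f_s}$ with the ancilla fixed in $\ket{-}$.
\item For item (3), the state computation already displayed gives amplitude $2^{-m}\sum_x(-1)^{(s\oplus y)\cdot x}=\delta_{y,s}$. This follows from the character orthogonality $\sum_x(-1)^{z\cdot x}=2^m\delta_{z,0}$, so a computational-basis measurement returns $s$ with probability $1$. Because the matrices are identical by Step 1, the same holds for the NQA realization.
\end{itemize}

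\textbf{Step 3: resource count (item 2).} Each Hadamard layer consists of exactly $m$ single-slot gates $I^{\otimes(k-1)}\otimes H\otimes I^{\otimes(m-k)}$, so there are two layers of $m$ local Hadamards. In the NQA realization the oracle is itself a product of at most $m$ local $Z$'s. The $\Theta(m)$ depth bound then follows by counting these gates sequentially: the upper bound is $2m$ Hadamards plus one oracle call (or at most $m$ local $Z$'s), and the lower bound is the $2m$ Hadamards. If the gates are instead counted as parallel layers, the depth is $O(1)$. I would state the convention explicitly, since $\Theta(m)$ holds for the sequential gate count. Making this counting convention precise is the only delicate point; everything else is direct verification.
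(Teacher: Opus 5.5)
Your proposal is correct and takes essentially the paper's own route: you identify $H=\tfrac{1}{\sqrt2}(X+Z)$ and $O_s=\prod_{k:s_k=1}Z_k$ with the standard gates and then invoke the character-sum amplitude computation. The paper's companion theorem instead gets item (3) from the slotwise identity $HZH=X$, which gives $H^{\otimes m}O_sH^{\otimes m}=X_1^{s_1}\cdots X_m^{s_m}$. Your caveat on item (2) is also justified and goes beyond the paper: $\Theta(m)$ holds only as a gate count, since under parallel layering (which the paper's own layer-based definition of depth suggests) the circuit has constant depth, and the paper never states which convention it uses.
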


\paragraph{Query models and why this does not contradict classical black-box bounds.}
In the classical black-box model for BV, one queries \(f_s(x)=s\cdot x\) and each query reveals
at most one bit of information, so \(\Omega(m)\) queries are necessary in the worst case
\cite{BernsteinVazirani1997,BerthiaumeBrassard1994}.  By contrast, the phase-oracle formulation
assumes access to the \emph{operator} \(O_s\) (or, in NQA, access to a structured description such as
the factorization \eqref{eq:oracle_as_Zs}). Under this stronger access model, one oracle
application suffices in the quantum circuit, and the NQA procedure can read off the same structure
symbolically without contradicting the classical lower bound.

\begin{theorem}[Asymptotic equivalence in real NQA vs.\ quantum computation]
For every \(s\in\{0,1\}^m\), the real operators
\(\widehat H^{(m)}:=H^{\otimes m}\) and \(\widehat O_s\in \mathrm{Mat}(2^m,\mathbb{R})\) satisfy
\begin{equation}
\widehat H^{(m)}\,\widehat O_s\,\widehat H^{(m)}\,\ket{0^m}=\ket{s}
\qquad\text{in } \mathbb{R}^{2^m}.
\label{eq:auto:0247}
\end{equation}
\end{theorem}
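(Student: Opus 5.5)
The plan is to reduce the identity to a one-qubit computation using the Kronecker structure of all three operators. By \eqref{eq:oracle_as_Zs} the oracle factorizes as
\[
\widehat O_s=\bigotimes_{k=1}^m Z^{s_k},\qquad Z^0:=I,\ Z^1:=Z .
\]
The Hadamard layer $\widehat H^{(m)}=H^{\otimes m}$ and the input $\ket{0^m}=\ket0^{\otimes m}$ are likewise tensor products. Since the Kronecker product is multiplicative, $(A_1\otimes\cdots\otimes A_m)(B_1\otimes\cdots\otimes B_m)=A_1B_1\otimes\cdots\otimes A_mB_m$, and this extends to vectors. Hence
\[
\widehat H^{(m)}\widehat O_s\widehat H^{(m)}\ket{0^m}=\bigotimes_{k=1}^m\bigl(H Z^{s_k} H\ket0\bigr).
\]
Everything here is a real matrix acting on $\mathbb{R}^{2^m}$, so no use of $\Phi$ is needed.

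It remains to verify the single-slot identity $HZ^{b}H\ket0=\ket b$ for $b\in\{0,1\}$, working in the block calculus. With $H=\tfrac1{\sqrt2}(X+Z)$ and the relations \eqref{eq:NQA_relations}:
\begin{itemize}
\item $H^2=\tfrac12(X^2+XZ+ZX+Z^2)=\tfrac12(2I+W-W)=I$, which settles $b=0$.
\item $HZH=\tfrac12(X+Z)Z(X+Z)=\tfrac12(XZX+XZZ+ZZX+ZZZ)=\tfrac12(-Z+X+X+Z)=X$. Here $XZX=-Z$ comes from the finite-closure lemma.
\end{itemize}
Since $X\ket0=\ket1$, the case $b=1$ follows. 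Taking the tensor product over $k$ then gives $\bigotimes_k\ket{s_k}=\ket{s}$, which is \eqref{eq:auto:0247}.

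As an optional cross-check, one can instead expand $\ket{0^m}$ and sum over $x$ using the character sum $2^{-m}\sum_x(-1)^{(s\oplus y)\cdot x}=\delta_{y,s}$, as in the preceding subsection; this reproduces the same result.

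I do not expect a real obstacle. The only points needing care are the sign bookkeeping in $HZH=X$, where the $W$ cross terms must cancel, and the justification that the mixed-product rule lets one pass from operator tensor products to their action on tensor-product vectors.
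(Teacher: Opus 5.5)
Your proof is correct and follows essentially the same route as the paper. Both arguments rest on the factorization $\widehat O_s=\prod_{k:\,s_k=1}Z_k$ and the single-qubit identity $HZH=X$, yielding $X_1^{s_1}\cdots X_m^{s_m}\ket{0^m}=\ket{s}$; your slotwise mixed-product organization also makes explicit the fact $H^2=I$, which the paper uses tacitly when it distributes the conjugation $\widehat H^{(m)}(\cdot)\widehat H^{(m)}$ over the product of the $Z_k$.
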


\begin{proof}
Using the structured form \(\widehat O_s=\prod_{k:\,s_k=1} Z_k\) and the single-qubit identity
\(H Z H = X\), we obtain
\[
\widehat H^{(m)}\,\widehat O_s\,\widehat H^{(m)}
=\prod_{k:\,s_k=1} \bigl(\widehat H^{(m)} Z_k \widehat H^{(m)}\bigr)
=\prod_{k:\,s_k=1} X_k 
= X_1^{s_1}\cdots X_m^{s_m}.
\]
Applying this to \(\ket{0^m}\) yields \(\ket{s}\). This proof uses only the NQA blocks \(I,X,Z\) and
their tensor-slot calculus, and it matches the standard BV reasoning under the same oracle access model.
\label{eq:auto:0248}
\end{proof}


The phase oracle is not the only example of how quantum-circuit expressions can be
asymptotically replaced by classical symbolic manipulations in the block representation.


\begin{definition}[Gate set]
Let \(\mathcal{G}\) be the gate set consisting of
\[
H_k,\; Z_k,\;
\text{controlled-}Z\text{ gates},\;
\text{multi-controlled-}Z\text{ gates},\;
\text{CNOT gates},
\]
acting on \(m\) qubits. A circuit \(C\) of depth \(L\) is a composition of \(L\) layers of gates from
\(\mathcal{G}\).
\label{eq:auto:0249}
\end{definition}

\begin{lemma}[Elementary gates in block form]
\label{lem:elementary-NQA}
Each one- and two-qubit gate in \(\mathcal{G}\) admits a decomposition into a constant number of
basis blocks \(B_{i_1\cdots i_m}\).
In particular:
\begin{enumerate}
\item
\begin{equation}
H_k = \frac{1}{\sqrt{2}}
\Big(B_{I\cdots I\,X\,I\cdots I} + B_{I\cdots I\,Z\,I\cdots I}\Big),
\label{eq:auto:0250a}
\end{equation}
where \(B_{I\cdots I\,X\,I\cdots I}\) means \(X\) in slot \(k\) and \(I\) elsewhere, and similarly for
\(B_{I\cdots I\,Z\,I\cdots I}\).

\item
\begin{equation}
Z_k = B_{I\cdots I\,Z\,I\cdots I}.
\label{eq:auto:0250b}
\end{equation}

\item Any two-qubit controlled-\(Z\) or CNOT gate acting on qubits \((p,q)\) can be written as a sum
of at most \(4\) terms \(c_\ell\,B_{\alpha^{(\ell)}}\) for suitable index strings
\(\alpha^{(\ell)}\in\{I,X,Z,W\}^m\) (with \(I\) in all slots other than \(p,q\)).

\item A multi-controlled-\(Z\) gate on a set \(C\subseteq\{1,\dots,m\}\) of control qubits is diagonal and can be written as
\begin{equation}
\mathrm{MCZ}_C \;=\; I \;-\; 2\prod_{k\in C} P_1^{(k)},
\qquad
P_1^{(k)}:=\tfrac12\big(I - Z_k\big),
\label{eq:auto:mcz_projector_form}
\end{equation}
so it admits a compact \emph{product} description of size \(O(|C|)\). (If one expands the product
into the block basis, the number of resulting block terms is \(2^{|C|}\).)
\end{enumerate}
\end{lemma}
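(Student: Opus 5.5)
The plan is to verify each item of Lemma~\ref{lem:elementary-NQA} directly from the single-qubit block identities and the mixed-product property of the Kronecker product. The basic tool is that an operator acting nontrivially only on slot \(k\) (resp.\ slots \(p,q\)) is obtained from its \(2\times 2\) (resp.\ \(4\times 4\)) block expansion by padding with \(I\) in every other slot. This padding map \(A\mapsto I^{\otimes(k-1)}\otimes A\otimes I^{\otimes(m-k)}\) is linear. It sends a single-qubit block \(B_i\) to the \(m\)-slot block with \(i\) in slot \(k\) and \(I\) elsewhere, and analogously for two slots. A routine permutation of tensor factors handles the case where \(p>q\) or the two qubits are not adjacent.

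\textbf{Items 1 and 2.} These follow immediately.
\begin{itemize}
\item Since \(H=\tfrac{1}{\sqrt2}(X+Z)\) as a \(2\times2\) matrix, padding gives \eqref{eq:auto:0250a}.
\item Since \(Z=B_Z\), padding gives \eqref{eq:auto:0250b}.
\end{itemize}

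\textbf{Item 3.} I will start from the \(m=2\) expansions already recorded:
\begin{itemize}
\item \(\mathrm{CZ}=\tfrac12(B_{II}+B_{IZ}+B_{ZI}-B_{ZZ})\);
\item \(\mathrm{CNOT}_{1\to2}=\tfrac12(B_{II}+B_{ZI}+B_{IX}-B_{ZX})\);
\item \(\mathrm{CNOT}_{2\to1}=\tfrac12(B_{II}+B_{IZ}+B_{XI}-B_{XZ})\).
\end{itemize}
Each of these can be checked from \(P_0=\tfrac12(I+Z)\) and \(P_1=\tfrac12(I-Z)\) by expanding, for example, \(P_0\otimes I+P_1\otimes X\) bilinearly. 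Placing the two nontrivial letters in slots \(p,q\) and \(I\) elsewhere yields at most four terms \(c_\ell B_{\alpha^{(\ell)}}\), each with \(c_\ell=\pm\tfrac12\).

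\textbf{Item 4.} I will argue in the computational basis.
\begin{itemize}
\item The operators \(P_1^{(k)}=\tfrac12(I-Z_k)\) are commuting diagonal projectors, since all \(Z_k\) are diagonal.
\item Their product \(\prod_{k\in C}P_1^{(k)}\) is the diagonal projector onto the basis states \(\ket{x}\) with \(x_k=1\) for all \(k\in C\).
\item The operator \(\mathrm{MCZ}_C\) multiplies exactly those states by \(-1\) and fixes all others, so it equals \(I-2\prod_{k\in C}P_1^{(k)}\).
\end{itemize}
This product description uses \(|C|\) factors, each given by two blocks, so its size is \(O(|C|)\).

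For the expanded count, write \(\prod_{k\in C}\tfrac12(I-Z_k)=2^{-|C|}\sum_{S\subseteq C}(-1)^{|S|}\prod_{k\in S}Z_k\). The product \(\prod_{k\in S}Z_k\) is the block with \(Z\) in slots \(S\) and \(I\) elsewhere. These \(2^{|C|}\) blocks are distinct, hence linearly independent by the orthonormality \eqref{eq:frob_orthonormal}.

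The main obstacle is this term count in item 4. The subset \(S=\emptyset\) contributes the identity block, which merges with the leading \(I\), giving coefficient \(1-2^{1-|C|}\). This coefficient vanishes exactly when \(|C|=1\), in which case \(\mathrm{MCZ}_C=Z_k\) is a single block. For \(|C|\ge 2\) no term cancels, so the number of block terms is exactly \(2^{|C|}\). I would therefore state the count as ``at most \(2^{|C|}\)'', with equality for \(|C|\ge2\). The orthonormality of the block basis is what guarantees that no further cancellation can occur.
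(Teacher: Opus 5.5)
Your proposal is correct and follows the same route as the paper's proof. Items 1--3 come from slot insertion (padding with \(I\)) applied to \(H=\tfrac{1}{\sqrt2}(X+Z)\) and to the \(m=2\) expansions of CZ and CNOT, and item 4 comes from the diagonal projector form. Your refinement of the term count in item 4 is also right: the paper's \(2^{|C|}\) is exact if read as counting the terms of \(\prod_{k\in C}P_1^{(k)}\), but for \(\mathrm{MCZ}_C\) itself the identity coefficient \(1-2^{1-|C|}\) vanishes when \(|C|=1\) (giving \(\mathrm{MCZ}_{\{k\}}=Z_k\)), so ``at most \(2^{|C|}\), with equality for \(|C|\ge 2\)'' is the accurate statement.
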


\begin{proof}
Items (1) and (2) follow directly from the single-qubit identity
\(H=\frac1{\sqrt2}(X+Z)\) and the definition of slot insertion.

For (3), consider the two-qubit case. On two qubits, the block basis is
\[
\{B_{ij}=B_i\otimes B_j:\ i,j\in\{I,X,Z,W\}\}.
\]
The controlled-\(Z\) gate on qubits \(1,2\) is diagonal with entries \((1,1,1,-1)\), hence
\begin{equation}
\mathrm{CZ}_{12}
  = \frac{1}{2}\big(B_{II} + B_{IZ} + B_{ZI} - B_{ZZ}\big),
\label{eq:auto:0252}
\end{equation}
a sum of \(4\) basis blocks. A similar \(4\)-term expansion holds for \(\mathrm{CNOT}\); for instance,
\(\mathrm{CNOT}_{1\to2}=\tfrac12(B_{II}+B_{ZI}+B_{IX}-B_{ZX})\) (see the two-qubit table in the previous section).
Embedding a two-qubit gate acting on \((p,q)\) into the \(m\)-qubit space is obtained by tensoring
with \(I\) in all other slots, so the number of block terms remains unchanged.

Item (4) is the diagonal projector form \eqref{eq:auto:mcz_projector_form}. It gives a compact
product description in terms of \(O(|C|)\) local factors \(P_1^{(k)}\). (Fully expanding the product
yields \(2^{|C|}\) block terms, as stated.)
\end{proof}




\begin{algorithm}[H]
\caption{Block/NQA Bernstein--Vazirani (BV) with \emph{structured} phase-oracle input}
\label{alg:NQA-BV}
\begin{algorithmic}[1]
\Require Integer \(m\).
\Require Structured description of the phase oracle \( \widehat O_s \) given in \emph{factorized form}.
        Concretely, the input is either:
        \begin{enumerate}
          \item a set (or sorted list) \(S\subseteq\{1,\dots,m\}\) such that
          \(\widehat O_s=\prod_{k\in S} Z_k\), where \(Z_k:=I^{\otimes(k-1)}\otimes Z\otimes I^{\otimes(m-k)}\); or
          \item an explicit product expression in which each factor is labeled by its wire index \(k\)
          and equals \(Z_k\) (factors may appear in any order).
        \end{enumerate}
\Ensure Hidden string \(s\in\{0,1\}^m\).

\State Initialize \(s \gets (0,\dots,0)\)
\State Initialize an empty Boolean array \(\mathrm{seen}[1..m]\gets \textbf{false}\)

\If{the structured oracle is provided as a set/list \(S\)}
    \ForAll{$k \in S$}
        \State $\mathrm{seen}[k]\gets \textbf{true}$
    \EndFor
\Else
    \Comment{Parse the product expression \(\widehat O_s=\prod_{\ell=1}^L Z_{k_\ell}\)}
    \For{$\ell=1$ to $L$}
        \State Read the index \(k_\ell\in\{1,\dots,m\}\) from the \(\ell\)-th factor
        \State $\mathrm{seen}[k_\ell]\gets \textbf{true}$
    \EndFor
\EndIf

\For{$k=1$ to $m$}
  \If{$\mathrm{seen}[k]$}
     \State $s_k \gets 1$
  \EndIf
\EndFor

\State \Return \(s\)
\end{algorithmic}
\end{algorithm}

\begin{lemma}[Classical black-box vs.\ structured NQA access]
\label{lem:blackbox_vs_structured}
The \(O(m)\) symbolic complexity of the block/NQA procedure does not contradict the classical black-box
lower bound for the Bernstein--Vazirani problem. In the standard classical query model one has access only
to an oracle computing the single-bit function \(f(x)=(s\cdot x)\bmod 2\); since each query reveals at most
one bit of information, at least \(m\) queries are necessary to determine \(s\).

By contrast, both the quantum algorithm and the block/NQA analysis assume \emph{structured access} to the
phase oracle \(\widehat O_s\): the ability to apply (quantumly) or manipulate (symbolically, in NQA) the entire
diagonal operator whose eigenvalues are \((-1)^{s\cdot x}\). Under this access model, the BV quantum algorithm
uses one oracle application, while the NQA method performs only \(O(m)\) bit-operations on the symbolic
description of \(\widehat O_s\). The classical black-box lower bound does not apply to this setting.
\end{lemma}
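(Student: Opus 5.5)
The plan is to prove the two halves of Lemma~\ref{lem:blackbox_vs_structured} separately: a lower bound in the classical black-box model, and an upper bound in the structured-access model. The point to make explicit is that these are bounds in two different computational models, so they cannot conflict.

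\emph{Classical lower bound.} I would use a linear-algebra adversary argument rather than a vague ``one bit per query'' heuristic.
\begin{itemize}
\item Fix any deterministic classical algorithm that makes \(q<m\) queries \(x^{(1)},\dots,x^{(q)}\), possibly chosen adaptively.
\item The answers \(s\cdot x^{(j)}\bmod 2\) are linear functionals of \(s\) over \(\mathbb{Z}_2\).
\item The set of strings consistent with the transcript is a coset of the annihilator of \(\mathrm{span}\{x^{(j)}\}\).
\item This annihilator has dimension at least \(m-q\ge 1\), so at least two distinct strings \(s\neq s'\) produce identical answers.
\item Adaptivity does not help: an adversary can answer each query consistently with both candidates, fixing the transcript step by step.
\item Hence the algorithm errs on one of \(s,s'\), and at least \(m\) queries are necessary.
\item For randomized algorithms, the same counting (each answer halves the consistent set at most) gives success probability at most \(2^{q-m}\) under uniform \(s\), by Yao's principle.
\end{itemize}

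\emph{Structured-access upper bound.}
\begin{itemize}
\item Invoke the theorem preceding Algorithm~\ref{alg:NQA-BV}, which gives \(\widehat H^{(m)}\widehat O_s\widehat H^{(m)}\ket{0^m}=\ket{s}\). Its quantum circuit applies \(\widehat O_s\) once.
\item For the symbolic side, analyse Algorithm~\ref{alg:NQA-BV} directly. Initialisation costs \(O(m)\). Processing the set \(S\) costs \(|S|\le m\) steps, or \(L\) steps for a product expression. The final loop costs \(O(m)\).
\item For correctness, use that \(Z_k^2=I\) and that the \(Z_k\) commute. Then \(\prod_{\ell} Z_{k_\ell}=\prod_{k\in S}Z_k\), where \(S\) is the set of indices occurring an odd number of times.
\item Equation~\eqref{eq:oracle_as_Zs} identifies \(S\) with \(\mathrm{supp}(s)\). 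The diagonal entries \((-1)^{s\cdot x}\) determine \(s\) uniquely, because \(x=e_k\) reads off \(s_k\).
\end{itemize}
This exposes a subtlety in the algorithm as written: it marks \(\mathrm{seen}[k]\) as true. I would need to change this to parity toggling, \(\mathrm{seen}[k]\gets\neg\,\mathrm{seen}[k]\), or else assume the factors are distinct. I will state that assumption explicitly. The total cost is then \(O(m+L)\), which is \(O(m)\) when factors are distinct.

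\emph{Non-contradiction.} The lower bound is a statement about algorithms whose only interface to \(s\) is the map \(x\mapsto f_s(x)\). The structured input instead hands the algorithm the factorization, i.e.\ the support of \(s\) itself, at zero query cost. Formally, no query to \(f_s\) is ever made, so the adversary argument has nothing to act on; the model in which the lower bound was proved does not contain this procedure.

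The main obstacle is making ``structured access'' precise enough that this last step is a theorem rather than a remark. I would first try defining the access model as a fixed input encoding: a list of wire labels, or equivalently a circuit description of \(\widehat O_s\). I would then observe that with this definition the bound is simply input-reading time. Finally, I would contrast it with the quantum case, where the phase oracle is applied to a superposition without being read.
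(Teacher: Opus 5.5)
Your proposal is correct. It follows the same three-part skeleton the paper uses: a classical lower bound, a structured-access upper bound via the theorem \(\widehat H^{(m)}\widehat O_s\widehat H^{(m)}|0^m\rangle=|s\rangle\) together with Algorithm~\ref{alg:NQA-BV}, and the observation that the two bounds live in different models. The paper gives no separate proof; its justification is the lemma's own text and the preceding paragraph on query models.

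Where you differ is in execution. For the lower bound the paper relies on the count ``each query reveals at most one bit.'' Made precise, this is already rigorous rather than vague: uniform \(s\) has entropy \(m\), a transcript of \(q\) one-bit answers carries at most \(q\) bits, and the chain rule absorbs adaptivity. It is also more general, since it never uses linearity of \(f_s\), and it extends to bounded error via Fano. Your coset argument instead exploits the linear structure. It identifies the consistent set exactly as a coset of the annihilator of \(\mathrm{span}\{x^{(j)}\}\). That gives an explicit pair \(s\neq s'\) the algorithm cannot distinguish, and the clean bound \(2^{q-m}\) for randomized algorithms.

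On the upper-bound side you actually verify Algorithm~\ref{alg:NQA-BV}, which the paper does not, and your objection is valid. Under input option (2) the factors need not be distinct. Since \(Z_k^2=I\), a word such as \(Z_1Z_1\) corresponds to \(s_1=0\), yet the algorithm as written returns \(s_1=1\). So either distinctness must be assumed (then \(L\le m\)), or \(\mathrm{seen}[k]\) must be toggled, at cost \(O(m+L)\).

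One further nuance in the same spirit concerns the phrase ``bit-operations.'' If it is read literally and the input is a list of wire labels, each label costs \(\lceil\log_2 m\rceil\) bits, so reading the list takes \(O(m\log m)\) bit operations. The \(O(m)\) count holds in a unit-cost model or for a characteristic-vector encoding of \(S\). This is consistent with your remark that the bound is just input-reading time.
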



%
\section{Grover search in the NQA framework}
\label{sec:grover_nqa}

In this section we analyze Grover’s search algorithm from the viewpoint of
Quantum Index Algebra (NQA).
Our goal is not to modify Grover’s asymptotic query complexity, which is known
to be optimal in the black--box oracle model, but to clarify the \emph{algebraic
structure} of the algorithm and its relation to Clifford and non-Clifford
operator classes.

\subsection{Problem setting and standard quantum complexity}

Let $N=2^m$ and let $f:\{0,1\}^m\to\{0,1\}$ mark a unique element $x^{*}$.
The phase oracle acts as
\begin{equation}
O_f\ket{x} = (-1)^{f(x)}\ket{x}.
\end{equation}
Grover’s iteration operator is
\begin{equation}
G := D\,O_f,
\end{equation}
where
\begin{equation}
D = 2\ket{s}\!\bra{s} - I,
\qquad
\ket{s} := H^{\otimes m}\ket{0^m}.
\end{equation}

In the standard quantum circuit model:
\begin{itemize}
  \item the oracle is accessed as a black box,
  \item each Grover iteration uses $O(m)$ elementary gates,
  \item the number of iterations required is $\Theta(\sqrt{N})$.
\end{itemize}
This yields the optimal query complexity
\begin{equation}
Q_{\mathrm{Grover}} = \Theta(\sqrt{N}),
\end{equation}
and a total circuit depth $\Theta(m\sqrt{N})$.

\subsection{NQA representation of the phase oracle}

In NQA, the phase oracle for a single marked element $x^{*}=(x^{*}_1,\dots,x^{*}_m)$
admits an explicit \emph{factorized} form.
Define local projectors
\begin{equation}
P^{(k)}_0 := \tfrac12(I+Z_k),\qquad
P^{(k)}_1 := \tfrac12(I-Z_k),
\end{equation}
where $Z_k$ acts on slot $k$.
Then
\begin{equation}
\ket{x^*}\!\bra{x^*} = \prod_{k=1}^m P^{(k)}_{x^{*}_k},
=
\prod_{k=1}^m P^{(k)}_{x^{*}_k},
\end{equation}

and therefore
\begin{equation}
O_f
=
I - 2\ket{x^{*}}\!\bra{x^{*}}
=
I - 2\prod_{k=1}^m P^{(k)}_{x^{*}_k}.
\end{equation}

Thus, although $O_f$ acts on a $2^m$-dimensional space, its NQA description
requires only $O(m)$ local factors.
This mirrors the situation encountered earlier for the Bernstein--Vazirani
oracle under structured access.

\subsection{NQA representation of the diffusion operator}

The diffusion operator
\begin{equation}
D = 2\ket{s}\!\bra{s} - I
\end{equation}
is built from the uniform superposition state $\ket{s}$.
In NQA,
\begin{equation}
H = \tfrac{1}{\sqrt2}(X+Z),
\end{equation}
so
\begin{equation}
H^{\otimes m}
=
\bigotimes_{k=1}^m \tfrac{1}{\sqrt2}(X_k+Z_k)
\end{equation}
is a tensor product of local NQA blocks.
Consequently, both $\ket{s}$ and the rank--one projector $\ket{s}\!\bra{s}$
admit compact index descriptions.

As a result, the diffusion operator $D$ is representable in NQA by an expression
whose size grows only linearly with $m$.

\subsection{Grover iteration in NQA}

Combining the previous observations, the Grover iteration operator
\begin{equation}
G = D\,O_f
\end{equation}
has the following properties in the NQA framework:
\begin{itemize}
  \item each iteration is expressed as a product of $O(m)$ local block terms,
  \item no dense $2^m\times2^m$ matrix arithmetic is required,
  \item the algebraic expression for $G$ does not grow with the number of
        iterations.
\end{itemize}

The number of iterations required to amplify the marked state remains
$\Theta(\sqrt{N})$, exactly as in the quantum algorithm.
Thus NQA does \emph{not} circumvent the Grover lower bound, but provides a
structured algebraic representation of the same dynamics.

\subsection{Asymptotic comparison}

Let $N=2^m$.
The asymptotic costs of Grover search in the two frameworks are:

\medskip
\begin{center}
\begin{tabular}{lcc}
\toprule
 & Quantum circuit & NQA formulation \\
\midrule
State dimension & $2^m$ & implicit \\
Oracle calls & $\Theta(\sqrt{N})$ & $\Theta(\sqrt{N})$ \\
Operations per iteration & $O(m)$ gates & $O(m)$ block terms \\
Dense matrix arithmetic & yes & no \\
Expression growth & implicit & constant \\
\bottomrule
\end{tabular}
\end{center}
\medskip

The query and depth complexities coincide, but the operator description in NQA
remains compact throughout the algorithm.

\subsection{Grover as a non-Clifford operation}

From an algebraic perspective, Grover’s algorithm is fundamentally non-Clifford.
The reason is that both the oracle $O_f$ and the diffusion operator $D$ involve
rank--one projectors, which lie outside the Clifford group.
In particular:
\begin{itemize}
  \item Clifford operations preserve stabilizer states,
  \item the Grover diffusion operator maps stabilizer states to non-stabilizer
        states,
  \item hence Grover dynamics cannot be simulated within the stabilizer formalism.
\end{itemize}

In NQA language, this non-Clifford character appears transparently as the presence
of projector terms that are not generated by the Clifford subalgebra, while still
admitting compact index representations.

\subsection{Relation to the Gottesman--Knill theorem}

The Gottesman--Knill theorem guarantees efficient classical simulation for
circuits composed solely of Clifford operations.
Grover’s algorithm lies outside this class.

The NQA framework strictly extends the Clifford algebra by admitting structured
projectors, allowing:
\begin{itemize}
  \item compact representation of non-Clifford operators,
  \item exact tracking of Grover iterations at the algebraic level,
  \item without reducing the intrinsic $\Theta(\sqrt{N})$ iteration count.
\end{itemize}

Thus NQA complements, rather than contradicts, the Gottesman--Knill theorem:
it explains algebraically why Grover is non-Clifford, while still retaining a
highly structured operator description.

\subsection{Summary}

Grover’s algorithm provides a second canonical example, alongside
Bernstein--Vazirani, in which NQA reproduces the full quantum dynamics with
identical asymptotic complexity while making the underlying operator structure
explicit.
The quadratic speedup arises not from dense matrix effects, but from repeated
application of a compact noncommutative operator that remains algebraically
simple throughout the computation.


\begin{remark}[Relation to the Gottesman--Knill theorem]
The present analysis is consistent with, and complementary to, the
Gottesman--Knill theorem.

The Gottesman--Knill theorem states that quantum circuits composed solely of
Clifford gates acting on stabilizer states admit efficient classical
simulation. Algebraically, this reflects the fact that Clifford dynamics
generate a finite subgroup of the unitary group whose action preserves the
Pauli algebra and its stabilizer structure.

In the NQA formulation, Clifford circuits correspond to operators generated by
finite products of involutive block matrices with discrete spectra. Their
dynamics remain confined to a finite orbit inside the real Clifford algebra,
which explains their classical simulability.

By contrast, Grover's algorithm necessarily leaves this regime. Although the
Grover iterate is constructed from reflections, their product generates a
continuous one-parameter rotation whose eigenphases are not restricted to
multiples of $\pi/2$. This places Grover's operator outside the Clifford group
and outside the scope of the Gottesman--Knill theorem.

Thus, from the NQA viewpoint, the boundary identified by Gottesman--Knill is
precisely the boundary between discrete Clifford dynamics and continuous
spectral rotations. Grover's algorithm crosses this boundary in an essential
way, while Bernstein--Vazirani remains entirely within it.
\end{remark}


\section{Two-qubit real Clifford structure: \texorpdfstring{$\mathrm{Cl}(2,2;\mathbb R)$}{Cl(2,2;R)} inside \texorpdfstring{$\mathrm{Mat}(4,\mathbb R)$}{Mat(4,R)}}
\label{subsec:cl22_twoqubit}

The two-qubit real operator space is \(\mathrm{Mat}(4,\mathbb R)\), which is \(16\)-dimensional over \(\mathbb R\).
A real Clifford algebra on \(4\) generators has dimension \(2^4=16\), and in the split signature \((2,2)\)
one has the matrix realization
\[
\mathrm{Cl}(2,2;\mathbb R)\;\cong\;\mathrm{Mat}(4,\mathbb R).
\]
Thus, for two real qubits, it is natural to identify \(\mathrm{Mat}(4,\mathbb R)\) with a Clifford algebra,
so that \emph{multiplication, grading (by word length), and commutation signs} become canonical.

A naive choice \(X\otimes I,\ Z\otimes I,\ I\otimes X,\ I\otimes Z\) does \emph{not} form a rank-\(4\)
Clifford generating set, because cross-qubit terms commute:
\((X\otimes I)(I\otimes Z)=(I\otimes Z)(X\otimes I)\),
whereas distinct Clifford generators must anticommute.

For a single slot, identify the four tiles with \((\alpha,\beta)\in(\mathbb Z_2)^2\)  \cite{RaptisRaptis2025GlobalMaps}
via
\[
(\alpha,\beta)=(0,0)\mapsto I,\quad
(1,0)\mapsto X,\quad
(0,1)\mapsto Z,\quad
(1,1)\mapsto XZ=W.
\]
For two qubits (\(m=2\)), define
\[
B(\alpha,\beta)\;:=\;\bigotimes_{k=1}^2 X^{\alpha_k}Z^{\beta_k}\in\mathrm{Mat}(4,\mathbb R),
\qquad \alpha,\beta\in(\mathbb Z_2)^2,
\]
so that \(B(\alpha,\beta)\) is exactly one of the \(16\) blocks \(B_{pq}=B_p\otimes B_q\).
Using the one-slot rule \(ZX=-XZ\) and slotwise Kronecker multiplication, one obtains the canonical
twisted group-law:
\begin{equation}
B(\alpha,\beta)\,B(\alpha',\beta')
\;=\;
(-1)^{\beta\cdot\alpha'}\,
B(\alpha\oplus_2\alpha',\;\beta\oplus_2\beta'),
\label{eq:cl22_twisted_product}
\end{equation}
where \(\oplus_2\) is bitwise XOR and \(\beta\cdot\alpha'=\sum_k \beta_k\alpha'_k\in\mathbb Z_2\) \cite{Kornyak2011FiniteGroups}.
This is the concrete mechanism behind “XOR in multiplication”: indices add by XOR, while the sign is
controlled by a bilinear pairing.

\paragraph{A concrete Clifford generating set in the block basis.}
Using the real blocks \(I,X,Z,W\) from \eqref{eq:NQA_blocks}--\eqref{eq:NQA_relations}, define
\begin{equation}
e_1 := X\otimes I,\qquad
e_2 := Z\otimes I,\qquad
e_3 := W\otimes X,\qquad
e_4 := W\otimes Z.
\label{eq:cl22_generators}
\end{equation}

\begin{lemma}[Clifford relations and signature \((2,2)\)]
The generators \(\{e_1,e_2,e_3,e_4\}\) satisfy
\[
e_1^2=e_2^2=+I_4,\qquad e_3^2=e_4^2=-I_4,\qquad e_re_s=-e_se_r\ \ (r\neq s).
\]
Hence they realize \(\mathrm{Cl}(2,2;\mathbb R)\) inside \(\mathrm{Mat}(4,\mathbb R)\).
\end{lemma}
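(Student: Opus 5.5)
The plan is a direct verification, using only the single-slot relations \eqref{eq:NQA_relations} and the slotwise rule $(A\otimes B)(C\otimes D)=AC\otimes BD$ for Kronecker products. First record the one-slot facts needed: $X^2=Z^2=I$, $W^2=-I$, and that the three nontrivial tiles pairwise anticommute. The pair $X,Z$ anticommutes because $XZ=W=-ZX$. For the other two pairs, compute $XW=X(XZ)=Z$ and $WX=XZX=-Z$ by the closure lemma, so $XW=-WX$. Likewise $ZW=ZXZ=-X$ and $WZ=XZZ=X$, so $ZW=-WZ$.

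Squares come next. Clearly $e_1^2=X^2\otimes I=I_4$ and $e_2^2=Z^2\otimes I=I_4$. For the other two, $e_3^2=W^2\otimes X^2=(-I)\otimes I=-I_4$, and similarly $e_4^2=W^2\otimes Z^2=-I_4$. This gives the signature $(2,2)$.

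Then the six anticommutators. I would organize them by counting anticommuting slots: a product $(A\otimes B)(C\otimes D)$ picks up a sign $-1$ relative to the reversed order for each slot whose factors anticommute, and $+1$ for each slot whose factors commute. A pair of generators anticommutes exactly when this count is odd. The six cases are:
\begin{itemize}
\item $(e_1,e_2)$: slot 1 holds $X,Z$, which anticommute; slot 2 holds $I,I$. One sign, so they anticommute.
\item $(e_1,e_3)$ and $(e_1,e_4)$: slot 1 holds $X,W$, which anticommute; slot 2 holds $I$ against $X$ or $Z$, which commute. One sign in each case.
\item $(e_2,e_3)$ and $(e_2,e_4)$: slot 1 holds $Z,W$, which anticommute; slot 2 involves $I$. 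One sign in each case.
\item $(e_3,e_4)$: slot 1 holds $W,W$, which commute; slot 2 holds $X,Z$, which anticommute. One sign.
\end{itemize}
Alternatively, each case can be read off from the twisted law \eqref{eq:cl22_twisted_product} with the parity $\beta\cdot\alpha'+\beta'\cdot\alpha$. In that case one must remember that $B(\alpha,\beta)$ uses the ordering $X^{\alpha}Z^{\beta}$, which matches $W=XZ$, so no extra signs arise.

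Finally, to conclude that the generators realize $\mathrm{Cl}(2,2;\mathbb R)$ inside $\mathrm{Mat}(4,\mathbb R)$, invoke the universal property. The relations give an algebra homomorphism $\mathrm{Cl}(2,2;\mathbb R)\to\mathrm{Mat}(4,\mathbb R)$. It is injective and in fact an isomorphism because the $16$ ordered monomials $e_{r_1}\cdots e_{r_k}$ ($r_1<\dots<r_k$) are, up to sign, the $16$ distinct blocks $B_{pq}$. For example, $e_1e_2=W\otimes I$, $e_3e_4=-I\otimes W$, and $e_1e_3=Z\otimes X$. Distinctness follows by XOR-ing the index vectors, since the generators' index vectors are linearly independent in $(\mathbb Z_2)^4$. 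By the orthonormality \eqref{eq:frob_orthonormal}, these blocks are linearly independent.

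The only mildly delicate step is sign bookkeeping with $W$ (for instance $WX=-Z$); the rest is routine.
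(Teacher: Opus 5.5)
Your proof is correct and follows the paper's route: the squares and the six anticommutators are checked by slotwise Kronecker multiplication from the one-slot relations $X^2=Z^2=I$, $W^2=-I$ and the pairwise anticommutation of $X,Z,W$. Two points the paper's short proof leaves implicit are made explicit in yours. Your slot-parity count handles the $(e_3,e_4)$ case, where slot~1 commutes and only slot~2 anticommutes. Your universal-property and dimension-count argument shows that the resulting map $\mathrm{Cl}(2,2;\mathbb R)\to\mathrm{Mat}(4,\mathbb R)$ is in fact an isomorphism.
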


\begin{proof}
From \eqref{eq:NQA_relations}, \(X^2=Z^2=I\) and \(W^2=-I\), so squares follow immediately by
slotwise multiplication. Anticommutation follows from the one-slot identities
\(XZ=-ZX\), \(XW=-WX\), \(ZW=-WZ\), again applied slotwise in the Kronecker product.
\end{proof}

\subsection*{Dictionary: \texorpdfstring{$B_{pq}$}{B\_pq} tiles \(\leftrightarrow\) \texorpdfstring{$\mathrm{Cl}(2,2)$}{Cl(2,2)} monomials \(\leftrightarrow\) Pauli labels}
\label{subsec:cl22_table}

Let \(B_{pq}:=B_p\otimes B_q\) for \(p,q\in\{I,X,Z,W\}\).
The “Corresponding Pauli operator” column indicates which complex Pauli string matches the real tile
under \(W=-i\mathbf{Y}\) (global phases are shown explicitly when helpful).

\begin{center}
\renewcommand{\arraystretch}{1.15}
\setlength{\tabcolsep}{6pt}
\begin{tabular}{@{}llll@{}}
\toprule
Real block \(B_{pq}\) & Corresponding Pauli operator & Clifford monomial & Traditional label (up to phase) \\
\midrule
\(B_{II}\) & \(\mathbf{I}\otimes \mathbf{I}\) & \(1\) & \(\mathbb I\otimes\mathbb I\) \\

\(B_{XI}\) & \(\mathbf{X}\otimes \mathbf{I}\) & \(e_1\) & \(\mathbf{X}_1\) \\
\(B_{ZI}\) & \(\mathbf{Z}\otimes \mathbf{I}\) & \(e_2\) & \(\mathbf{Z}_1\) \\
\(B_{WX}\) & \((-i)\,\mathbf{Y}\otimes \mathbf{X}\) & \(e_3\) & \((-i)\,\mathbf{Y}_1\mathbf{X}_2\) \\
\(B_{WZ}\) & \((-i)\,\mathbf{Y}\otimes \mathbf{Z}\) & \(e_4\) & \((-i)\,\mathbf{Y}_1\mathbf{Z}_2\) \\

\(B_{WI}\) & \((-i)\,\mathbf{Y}\otimes \mathbf{I}\) & \(e_1e_2\) & \((-i)\,\mathbf{Y}_1\) \\
\(B_{ZX}\) & \(\mathbf{Z}\otimes \mathbf{X}\) & \(e_1e_3\) & \(\mathbf{Z}_1\mathbf{X}_2\) \\
\(B_{ZZ}\) & \(\mathbf{Z}\otimes \mathbf{Z}\) & \(e_1e_4\) & \(\mathbf{Z}_1\mathbf{Z}_2\) \\
\(B_{XX}\) & \(\mathbf{X}\otimes \mathbf{X}\) & \(-\,e_2e_3\) & \(\mathbf{X}_1\mathbf{X}_2\) \\
\(B_{XZ}\) & \(\mathbf{X}\otimes \mathbf{Z}\) & \(-\,e_2e_4\) & \(\mathbf{X}_1\mathbf{Z}_2\) \\
\(B_{IW}\) & \(\mathbf{I}\otimes (-i)\,\mathbf{Y}\) & \(-\,e_3e_4\) & \((-i)\,\mathbf{Y}_2\) \\

\(B_{IX}\) & \(\mathbf{I}\otimes \mathbf{X}\) & \(-\,e_1e_2e_3\) & \(\mathbf{X}_2\) \\
\(B_{IZ}\) & \(\mathbf{I}\otimes \mathbf{Z}\) & \(-\,e_1e_2e_4\) & \(\mathbf{Z}_2\) \\
\(B_{XW}\) & \(\mathbf{X}\otimes (-i)\,\mathbf{Y}\) & \(-\,e_1e_3e_4\) & \((-i)\,\mathbf{X}_1\mathbf{Y}_2\) \\
\(B_{ZW}\) & \(\mathbf{Z}\otimes (-i)\,\mathbf{Y}\) & \(-\,e_2e_3e_4\) & \((-i)\,\mathbf{Z}_1\mathbf{Y}_2\) \\

\(B_{WW}\) & \((-i\,\mathbf{Y})\otimes(-i\,\mathbf{Y}) = -\,\mathbf{Y}\otimes\mathbf{Y}\) & \(-\,\omega\) & \(-\,\mathbf{Y}_1\mathbf{Y}_2\) \\
\bottomrule
\end{tabular}
\end{center}

\paragraph{The pseudoscalar and grading.}
Define the pseudoscalar
\[
\omega:=e_1e_2e_3e_4.
\]
In this concrete representation one finds
\begin{equation}
\omega \;=\; -\,B_{WW},\qquad \omega^2=+I_4.
\label{eq:cl22_pseudoscalar}
\end{equation}
The Clifford \emph{grade} (word length) gives a canonical \(\mathbb Z\)-filtration and a \(\mathbb Z_2\)-parity:
even elements are linear combinations of grades \(0,2,4\), odd elements of grades \(1,3\).

The six bivectors \(e_ie_j\) (\(i<j\)) span the Lie algebra of the spin group under the commutator:
\[
\mathfrak{spin}(2,2)=\mathrm{span}\{e_ie_j:\ 1\le i<j\le 4\},
\qquad [A,B]=AB-BA.
\]
Exponentials of bivectors give real orthogonal transformations on \(\mathbb R^4\):
\[
U=\exp\Big(\sum_{i<j}\theta_{ij}e_ie_j\Big)\in \mathrm{Spin}(2,2)\subset \mathrm{Cl}^0(2,2).
\]
In the \(B_{pq}\) language, bivectors are exactly the (signed) two-tile products listed in the table,
so the continuous two-qubit real Clifford dynamics is generated by a \emph{six-parameter} bivector sector.

The even subalgebra \(\mathrm{Cl}^0(2,2)\) is \(8\)-dimensional and is the natural home of
spin transformations. In split signature one has the well-known decomposition
\[
\mathfrak{so}(2,2)\ \cong\ \mathfrak{sl}(2,\mathbb R)\oplus \mathfrak{sl}(2,\mathbb R),
\]
which matches the intuition that two commuting real rank-\(2\) sectors control the two-qubit
split-orthogonal geometry. Operationally, this explains why the bivector commutator algebra closes
and why conjugation by such exponentials permutes/rotates the tile basis in a highly structured way.

The Pauli-string viewpoint classifies operators up to phases and is ideal for standard quantum notation.
The Clifford viewpoint adds a \emph{canonical multiplication and grading} on the same \(16\)-dimensional
space: products are tracked by XOR of binary indices plus a bilinear sign rule
\eqref{eq:cl22_twisted_product}, and commutation/anticommutation becomes a parity statement in those indices.
This is exactly the mechanism used later for \((\mathbb Z_2)^{2m}\)-grading and color-commutation \ref{sec:index_twisted}.


%

\section{Algebraic structure of the Bell--CHSH scenario in the NQA framework}
\label{sec:bell_nqa_structure}

In this section we reformulate the Bell--CHSH scenario for spin-$\tfrac12$ in a way
that separates two algebraic layers:
\begin{enumerate}
  \item the \emph{complex} spinor/Clifford algebra that encodes quantum spin
        observables and their correlations, and
  \item its \emph{real NQA realization} obtained via an explicit realification
        homomorphism~$\Phi$.
\end{enumerate}
The Bell--CHSH violation then appears as an algebraic phenomenon: \emph{commutativity
forces a $\le 2$ spectral bound}, while the spinor algebra admits a CHSH operator
whose spectrum reaches~$\pm 2\sqrt2$.

\subsection{Spin-\texorpdfstring{$\tfrac12$}{1/2} as a complex spinor algebra}

Single-spin observables are represented by Hermitian $2\times2$ matrices of the form
\begin{equation}
\Sigma(n) \;=\; n_x X + n_y (iW) + n_z Z,
\qquad n=(n_x,n_y,n_z)\in S^2,
\end{equation}
where $X,Z,W$ are the real NQA blocks and $iW$ plays the role of the Pauli matrix
$\sigma_y$. These satisfy the Clifford relations
\begin{equation}
\Sigma(n)^2 = I,
\qquad
\Sigma(n)\Sigma(m)+\Sigma(m)\Sigma(n)=2(n\cdot m)\,I.
\end{equation}

For two spins, the observable algebra is
\begin{equation}
\mathcal A_{\mathrm{spin}}^{\mathbb C}
:= \mathrm{Mat}(2,\mathbb C)\otimes \mathrm{Mat}(2,\mathbb C)
\cong \mathrm{Mat}(4,\mathbb C),
\end{equation}
with locality implemented as strict tensor-factor separation: all Alice observables
commute with all Bob observables.

In the singlet state $|\Psi\rangle$, the correlation is
\begin{equation}
E_{\mathrm{spin}}(a,b)
=\langle\Psi,\;\Sigma(a)\otimes\Sigma(b)\;\Psi\rangle
= -\,a\cdot b,
\end{equation}
and the CHSH operator achieves Tsirelson's bound
\begin{equation}
\|S_{\mathrm{spin}}\|=2\sqrt2,
\end{equation}
as in~\cite{Tsirelson1980} (see also the experiments~\cite{Aspect1982,Hensen2015,Giustina2015,Shalm2015}).


\subsection{Two explicit CHSH operators: quantum vs.\ classical}

\paragraph{Quantum (spinor/NQA) CHSH matrix.}
Fix the standard CHSH choice of local $\pm1$ observables in a real $4\times4$ model
(equivalently, one may start in $\mathrm{Mat}(4,\mathbb C)$ and realify via~$\Phi$).
The resulting CHSH operator has the concrete real matrix representation
\begin{equation}
S_{\mathrm{Q}}
=
\begin{pmatrix}
\sqrt2 & 0 & 0 & \sqrt2\\
0 & -\sqrt2 & \sqrt2 & 0\\
0 & \sqrt2 & -\sqrt2 & 0\\
\sqrt2 & 0 & 0 & \sqrt2
\end{pmatrix},
\qquad
\operatorname{spec}(S_{\mathrm{Q}})=\{2\sqrt2,\,-2\sqrt2,\,0,\,0\},
\label{eq:S_quantum_matrix}
\end{equation}
hence
\begin{equation}
\|S_{\mathrm{Q}}\|=2\sqrt2.
\end{equation}
This is precisely Tsirelson's bound rewritten as spectral norm expression. 

\paragraph{Classical (commutative) CHSH matrix.}


In a classical hidden-variable model, all four observables are jointly defined
$\{\pm1\}$-valued random variables $A_0,A_1,B_0,B_1$ on a single probability space
$(\Lambda,\rho)$. For a finite space $\Lambda=\{\lambda_1,\dots,\lambda_N\}$, represent
each observable as an $N\times N$ diagonal matrix:
\begin{equation}
\widehat{A_x}:=\mathrm{diag}(A_x(\lambda_1),\dots,A_x(\lambda_N)),
\qquad
\widehat{B_y}:=\mathrm{diag}(B_y(\lambda_1),\dots,B_y(\lambda_N)),
\end{equation}
so all classical observables commute. Define the classical CHSH operator
\begin{equation}
\widehat{S}_{\mathrm{cl}}
:= \widehat{A_0}\widehat{B_0}
 + \widehat{A_0}\widehat{B_1}
 + \widehat{A_1}\widehat{B_0}
 - \widehat{A_1}\widehat{B_1}.
\label{eq:S_classical_def}
\end{equation}
Since every diagonal entry is a number of the form
$s(\lambda)=a_0b_0+a_0b_1+a_1b_0-a_1b_1$ with $a_x,b_y\in\{\pm1\}$, one has
\begin{equation}
s(\lambda)\in\{\pm 2\}\quad\text{for all }\lambda\in\Lambda,
\qquad
\operatorname{spec}(\widehat{S}_{\mathrm{cl}})\subset\{-2,+2\},
\label{eq:S_classical_spectrum}
\end{equation}
hence $\|\widehat{S}_{\mathrm{cl}}\|\le 2$ and $|\langle S_{\mathrm{cl}}\rangle|\le 2$.
This is the ``extremal spectrum'' signature of commutativity.

\subsection{Classical hidden-variable models as commutative algebras}

Algebraically, the hidden-variable assumption is precisely that all observables live
in a single commutative Kolmogorov algebra
\begin{equation}
\mathcal C(\Lambda)\cong L^\infty(\Lambda)
\end{equation}
(or $\mathbb R^N$ in the finite case), so that $A_0,A_1,B_0,B_1$ are jointly definable
and mutually commuting. This commutativity is the only algebraic input needed to
force the CHSH bound.

\subsection{Fine's theorem as an algebraic criterion (NQA viewpoint)}

Fine's theorem~\cite{Fine1982} can be read as an \emph{algebraic} equivalence:
existence of a single joint distribution for all outcomes (for all settings) is the
same as the existence of a single commutative Kolmogorov algebra hosting
$A_0,A_1,B_0,B_1$ as $\pm1$ elements. In the present language, the CHSH inequalities
are exactly the spectral/norm constraints that every such commutative representation
imposes on the CHSH combination~\eqref{eq:S_classical_def}.

\section{Bell--CHSH violation as an algebraic non-embeddability result}
\label{sec:bell_nonembed}

We now condense the discussion into a single algebraic theorem and its immediate
consequence.

\begin{theorem}[Commutativity forces the CHSH bound]
\label{thm:commutative_chsh}
Let $\mathcal C$ be a unital commutative real algebra and let
$A_0,A_1,B_0,B_1\in\mathcal C$ satisfy
\begin{equation}
A_0^2=A_1^2=B_0^2=B_1^2=I.
\label{eq:square_one}
\end{equation}
Define the CHSH element
\begin{equation}
S_{\mathcal C}:=A_0B_0 + A_0B_1 + A_1B_0 - A_1B_1\in\mathcal C.
\label{eq:S_in_C}
\end{equation}
Then every (Gelfand) character $\chi:\mathcal C\to\mathbb R$ satisfies
\begin{equation}
|\chi(S_{\mathcal C})|\le 2,
\end{equation}
and in any faithful diagonal (commuting) matrix representation one has
\begin{equation}
\operatorname{spec}(S_{\mathcal C})\subset[-2,2],
\qquad
\|S_{\mathcal C}\|\le 2.
\end{equation}
In particular, every Kolmogorov hidden-variable model satisfies
$|\langle S_{\mathrm{cl}}\rangle|\le 2$.
\end{theorem}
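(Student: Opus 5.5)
The plan is to reduce everything to the scalar CHSH computation already used in \eqref{eq:S_classical_spectrum}, via the multiplicativity of characters. First, fix a character $\chi:\mathcal C\to\mathbb R$, i.e.\ a unital algebra homomorphism. From \eqref{eq:square_one} we get $\chi(A_x)^2=\chi(A_x^2)=\chi(I)=1$, so $a_x:=\chi(A_x)\in\{\pm1\}$, and likewise $b_y:=\chi(B_y)\in\{\pm1\}$. Because $\chi$ is multiplicative and linear,
\[
\chi(S_{\mathcal C})=a_0b_0+a_0b_1+a_1b_0-a_1b_1=a_0(b_0+b_1)+a_1(b_0-b_1).
\]
Exactly one of $b_0+b_1$ and $b_0-b_1$ equals $\pm2$, and the other vanishes. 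Hence $\chi(S_{\mathcal C})\in\{\pm2\}$, so in particular $|\chi(S_{\mathcal C})|\le 2$. Commutativity enters only implicitly: it is what makes the character picture a faithful description.

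Second, treat a faithful diagonal representation $\pi:\mathcal C\to\mathrm{Mat}(N,\mathbb R)$ with all $\pi(A_x),\pi(B_y)$ diagonal. Each coordinate map $\chi_j(c):=\pi(c)_{jj}$ is then a unital homomorphism into $\mathbb R$, because products of diagonal matrices multiply entrywise. So $\chi_j$ is a character. The spectrum of the diagonal matrix $\pi(S_{\mathcal C})$ is $\{\chi_j(S_{\mathcal C})\}_j\subset\{\pm2\}\subset[-2,2]$. Since a real diagonal matrix is symmetric, its operator norm equals its spectral radius, which gives $\|S_{\mathcal C}\|\le2$.

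Third, for a Kolmogorov model $(\Lambda,\rho)$, the map $\chi_\lambda$ given by evaluation at $\lambda$ is a character of $L^\infty(\Lambda)$. Then $\langle S_{\mathrm{cl}}\rangle=\int_\Lambda \chi_\lambda(S)\,d\rho(\lambda)$ is a convex average of values in $\{\pm2\}$, so its absolute value is at most $2$. In the finite case this is just a weighted sum of the diagonal entries.

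The main obstacle is interpretive rather than computational. For an abstract commutative real algebra there may be no characters at all (for example $\mathbb C$ viewed as an $\mathbb R$-algebra), or they may fail to separate points, in which case the first claim is true but vacuous. I will therefore state the norm and spectrum claims only under the stated hypothesis of a faithful diagonal representation. That hypothesis supplies enough characters, and under it the argument goes through cleanly.
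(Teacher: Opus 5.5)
Your proposal is correct and follows the paper's proof essentially step for step. You evaluate a character to get $a_x,b_y\in\{\pm1\}$, write $a_0(b_0+b_1)+a_1(b_0-b_1)\in\{\pm2\}$, and read the eigenvalues of the diagonal representation as pointwise character values. Your extra steps (operator norm equals spectral radius for a diagonal matrix, and the convex-average argument for $|\langle S_{\mathrm{cl}}\rangle|\le 2$) only make explicit what the paper leaves implicit. One small precision: for a non-atomic $\rho$, evaluate the $\pm1$-valued functions pointwise, rather than calling point evaluation a character of $L^\infty(\Lambda)$, since elements of $L^\infty(\Lambda)$ are only defined up to null sets.
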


\begin{proof}
Because $\mathcal C$ is commutative and \eqref{eq:square_one} holds, for any character
$\chi$ the numbers $a_x:=\chi(A_x)$ and $b_y:=\chi(B_y)$ obey $a_x^2=b_y^2=1$, hence
$a_x,b_y\in\{\pm1\}$. Therefore
\[
\chi(S_{\mathcal C})
= a_0b_0+a_0b_1+a_1b_0-a_1b_1
= a_0(b_0+b_1)+a_1(b_0-b_1).
\]
Since $b_0\pm b_1\in\{-2,0,2\}$ and exactly one of $(b_0+b_1)$, $(b_0-b_1)$ equals
$\pm 2$ while the other equals $0$, it follows that $\chi(S_{\mathcal C})=\pm 2$ and
hence $|\chi(S_{\mathcal C})|\le 2$.
In a diagonal representation, eigenvalues are precisely pointwise character values,
so $\operatorname{spec}(S_{\mathcal C})\subset\{-2,+2\}\subset[-2,2]$ and the norm
bound follows.
\end{proof}

\begin{theorem}[Existence of a quantum CHSH element with spectrum $\pm 2\sqrt2$]
\label{thm:quantum_chsh_spectrum}
In the (noncommutative) two-spin spinor algebra $\mathcal A_{\mathrm{spin}}^{\mathbb C}
\cong\mathrm{Mat}(4,\mathbb C)$ there exist $\pm1$ observables
$A_0,A_1,B_0,B_1$ (with $[A_x,B_y]=0$ for locality) such that the corresponding CHSH
operator has
\begin{equation}
\operatorname{spec}(S_{\mathrm{spin}})=\{2\sqrt2,\,-2\sqrt2,\,0,\,0\},
\qquad
\|S_{\mathrm{spin}}\|=2\sqrt2.
\end{equation}
Equivalently, in the real NQA representation (via~$\Phi$) there exists a real matrix
$S_{\mathrm{Q}}$ with spectrum as in~\eqref{eq:S_quantum_matrix}.
\end{theorem}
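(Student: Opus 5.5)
The plan is to give an explicit choice of observables built only from the real blocks $X,Z$. Then $S_{\mathrm{spin}}$ turns out to be a real matrix already, and the complex and real (NQA) statements can be settled together. I would take
\[
A_0:=Z\otimes I,\qquad A_1:=X\otimes I,\qquad
B_0:=I\otimes\tfrac{1}{\sqrt2}(Z+X),\qquad B_1:=I\otimes\tfrac{1}{\sqrt2}(Z-X).
\]

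First I would check the hypotheses.

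\emph{Squares.} $A_x^2=I_4$ follows from $X^2=Z^2=I$ in \eqref{eq:NQA_relations}. For $B_y^2=I_4$, expand $\tfrac12(Z\pm X)^2=\tfrac12(2I\pm(ZX+XZ))$ and use $ZX+XZ=-W+W=0$.

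\emph{Locality.} $[A_x,B_y]=0$ holds because the $A$'s act only in slot~1 and the $B$'s only in slot~2. Slotwise Kronecker multiplication then commutes them.

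\emph{Hermiticity.} All four matrices are real symmetric, so they are genuine $\pm1$ observables in $\mathrm{Mat}(4,\mathbb C)$. Since $Z=\Sigma(\hat z)$ and $X=\Sigma(\hat x)$, and $\tfrac{1}{\sqrt2}(Z\pm X)=\Sigma\bigl(\tfrac{1}{\sqrt2}(\hat z\pm\hat x)\bigr)$, this is the standard Tsirelson choice of directions.

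Next I would compute the CHSH element using the factorization from the proof of Theorem~\ref{thm:commutative_chsh}:
\[
S_{\mathrm{spin}}=A_0(B_0+B_1)+A_1(B_0-B_1)=\sqrt2\,(Z\otimes Z+X\otimes X)=\sqrt2\,(B_{ZZ}+B_{XX}).
\]
Writing out $B_{ZZ}=\mathrm{diag}(1,-1,-1,1)$ and $B_{XX}$ as the anti-diagonal permutation reproduces exactly the matrix $S_{\mathrm{Q}}$ in \eqref{eq:S_quantum_matrix}.

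For the spectrum, note that $B_{ZZ}$ and $B_{XX}$ commute. This follows from the twisted product law \eqref{eq:cl22_twisted_product}: the two sign factors are each $(-1)^{0}$ or $(-1)^{2}$, hence equal. Both square to $I_4$, so they are simultaneously diagonalizable with joint eigenvalues in $\{\pm1\}^2$. The joint eigenvectors are the four Bell vectors:
\begin{itemize}
\item $\ket{00}+\ket{11}$ gives $(+1,+1)$;
\item $\ket{00}-\ket{11}$ gives $(+1,-1)$;
\item $\ket{01}+\ket{10}$ gives $(-1,+1)$;
\item $\ket{01}-\ket{10}$ gives $(-1,-1)$.
\end{itemize}
So $S$ has eigenvalues $\sqrt2(1+1)$, $0$, $0$, $\sqrt2(-1-1)$, that is $\{2\sqrt2,-2\sqrt2,0,0\}$. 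Since $S$ is symmetric, $\|S\|=2\sqrt2$.

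As a cross-check I would also verify the general identity $S^2=4I-[A_0,A_1][B_0,B_1]$. Here $[Z,X]=2W$ and $[\tfrac{Z+X}{\sqrt2},\tfrac{Z-X}{\sqrt2}]=-2W$. These give $S^2=4I+4B_{WW}$, and since $B_{WW}$ has eigenvalues $\pm1$ this is consistent with $S^2\in\{8,0\}$.

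For the "equivalently, via $\Phi$" clause, the observables have zero imaginary part, so $\Phi(S_{\mathrm{spin}})=S_{\mathrm{Q}}\otimes I$. This has the same spectral values, each doubled in multiplicity, and the same norm. The real $4\times4$ matrix $S_{\mathrm{Q}}$ itself already witnesses the claim.

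There is no serious obstacle. The one point needing care is the sign bookkeeping that identifies $\sqrt2(B_{ZZ}+B_{XX})$ entrywise with \eqref{eq:S_quantum_matrix}. I would confirm it by a direct $4\times4$ multiplication.
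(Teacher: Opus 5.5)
Your proof is correct and follows the paper's route: exhibit $S_{\mathrm{Q}}$ explicitly and read off its spectrum. It also supplies the commuting local $\pm1$ observables $A_0=Z\otimes I$, $A_1=X\otimes I$, $B_{0,1}=I\otimes\tfrac{1}{\sqrt2}(Z\pm X)$, which the paper only alludes to as ``standard CHSH settings'' and never checks, so you close a step the paper leaves implicit. One slip in your optional cross-check: by \eqref{eq:NQA_relations}, $[Z,X]=ZX-XZ=-2W$ and $\bigl[\tfrac{Z+X}{\sqrt2},\tfrac{Z-X}{\sqrt2}\bigr]=+2W$, so both signs you wrote are flipped; the two errors cancel, and $S^2=4I_4+4B_{WW}$ is right.
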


\begin{proof}
One may take the explicit real matrix $S_{\mathrm{Q}}$ in~\eqref{eq:S_quantum_matrix},
which arises from standard CHSH settings in a real two-qubit model (or from the
complex model followed by realification). A direct eigenvalue computation gives the
stated spectrum and norm, hence $\|S_{\mathrm{spin}}\|=2\sqrt2$.
\end{proof}

\begin{corollary}[Bell--CHSH as non-embeddability]
\label{cor:bell_nonembed}
Let $\mathcal A_{\mathrm{spin}}^{\mathbb C}\cong\mathrm{Mat}(4,\mathbb C)$ denote the
two-spin spinor algebra (or its real NQA image $\Phi(\mathcal A_{\mathrm{spin}}^{\mathbb C})$).
There exists no injective algebra homomorphism into any commutative Kolmogorov algebra
$\mathcal C(\Lambda)$ that preserves the CHSH element (equivalently, its spectrum or
operator norm):
\begin{equation}
\mathcal A_{\mathrm{spin}}^{\mathbb C}\not\hookrightarrow \mathcal C(\Lambda)
\quad\text{in a way compatible with CHSH.}
\end{equation}
\end{corollary}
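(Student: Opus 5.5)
The argument is by contradiction, combining Theorem~\ref{thm:commutative_chsh} with Theorem~\ref{thm:quantum_chsh_spectrum}. Suppose $\psi:\mathcal A_{\mathrm{spin}}^{\mathbb C}\to\mathcal C(\Lambda)$ is an injective unital algebra homomorphism. On the real side, take $\psi$ defined on $\Phi(\mathcal A_{\mathrm{spin}}^{\mathbb C})$, which is a real algebra by the proposition that $\Phi$ is a real $*$-embedding. Take the local $\pm1$ observables $A_0,A_1,B_0,B_1$ supplied by Theorem~\ref{thm:quantum_chsh_spectrum}. Since $\psi$ is multiplicative and unital, their images $\psi(A_x),\psi(B_y)$ square to $I$ in $\mathcal C(\Lambda)$. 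By linearity, the image $\psi(S_{\mathrm{spin}})$ is exactly the commutative CHSH element $S_{\mathcal C}$ of \eqref{eq:S_in_C} built from these images.

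Theorem~\ref{thm:commutative_chsh} then gives $\chi(\psi(S_{\mathrm{spin}}))\in\{\pm2\}$ for every character $\chi$ of $\mathcal C(\Lambda)$, that is, for every point evaluation at $\lambda\in\Lambda$. Hence the spectrum of $\psi(S_{\mathrm{spin}})$ is contained in $\{-2,2\}$ and its norm is at most $2$.

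I then use spectral permanence under unital homomorphisms. If $S-\mu I$ is invertible, so is $\psi(S)-\mu I$, so $\operatorname{spec}(\psi(S))\subseteq\operatorname{spec}(S)$. Injectivity gives more: $S$ satisfies its minimal polynomial $p(t)=t(t^2-8)$, and $\psi(S)$ satisfies $p(\psi(S))=0$. Conversely, any polynomial $q$ with $q(\psi(S))=0$ gives $\psi(q(S))=0$, hence $q(S)=0$ by injectivity. So $S$ and $\psi(S)$ have the same minimal polynomial.

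The pointwise values of $\psi(S)$ are roots of the minimal polynomial, and every root of the minimal polynomial is attained pointwise, because otherwise a proper factor would already annihilate $\psi(S)$. Therefore $\{0,\pm2\sqrt2\}=\operatorname{spec}\psi(S)\subseteq\{\pm2\}$, which is a contradiction. The same reasoning shows that a map merely ``preserving the CHSH spectrum or norm'' is impossible, since $2\sqrt2>2$. That also covers the parenthetical reading in the corollary, where preservation of the spectrum or norm is imposed directly.

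The main obstacle is the degenerate case. For $\mathrm{Mat}(4,\mathbb C)$, or its real image, any unital homomorphism into a commutative algebra kills all commutators. Since the algebra is simple and $AB-BA\neq0$ for some $A,B$, no such injective map exists at all when $\Lambda\neq\emptyset$, and when $\Lambda=\emptyset$ the target is the zero algebra, which cannot receive an injective map either. I would record this stronger observation as a remark. The minimal-polynomial argument is still the one to present, because it explicitly ties the obstruction to the CHSH element: the value $2\sqrt2$ cannot be realized by $\pm1$ characters. The one detail to check is the case of infinite $\Lambda$ with $\mathcal C(\Lambda)=L^\infty$. There "pointwise" must be read as "almost everywhere" and characters as essential-range points, and the same minimal-polynomial argument goes through with the essential range in place of the pointwise values.
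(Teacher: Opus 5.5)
Your argument is correct, but it takes a different and stronger route than the paper. The paper's proof treats ``preserves the CHSH element (equivalently, its spectrum or operator norm)'' as a \emph{hypothesis} on the embedding. Theorem~\ref{thm:commutative_chsh} then forces the image to have norm at most $2$, which contradicts the norm $2\sqrt2$ from Theorem~\ref{thm:quantum_chsh_spectrum}. That is exactly the sub-case you dispatch in one sentence at the end of your third paragraph.

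You instead \emph{derive} spectral preservation from injectivity via the minimal polynomial $t(t^2-8)$, so the CHSH-compatibility clause becomes redundant. You could even skip the ``every root is attained'' step: the values of $\psi(S)$ already lie in $\{0,\pm2\sqrt2\}$, which is disjoint from $\{\pm2\}$, so $\Lambda$ must be empty.

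What you call the degenerate case is really the whole story. In any commutative target, $\psi(AB-BA)=\psi(A)\psi(B)-\psi(B)\psi(A)=0$. So a noncommutative algebra such as $\mathrm{Mat}(4,\mathbb C)$ or its realification admits no injective homomorphism into any $\mathcal C(\Lambda)$, whether or not CHSH is mentioned.

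The paper's route buys brevity and puts the $2$ versus $2\sqrt2$ spectral gap on display as the obstruction. Yours makes the hypotheses honest. It shows that the genuine CHSH content is that characters taking $\pm1$ values on the observables only yield CHSH values $\pm2$, rather than the (automatic) non-existence of embeddings.

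One minor point, which the paper shares: the step $\psi(A_x)^2=I$ uses unitality, and the corollary does not grant it. Your commutator remark needs no unitality, and neither does the root argument, since $t(t^2-8)$ has no constant term. Alternatively, you can pass to the corner $\psi(I)\,\mathcal C(\Lambda)$, which is again a commutative Kolmogorov algebra with unit $\psi(I)$.
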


\begin{proof}
If such an embedding into a commutative algebra existed and preserved the CHSH element,
then by Theorem~\ref{thm:commutative_chsh} its image would have spectrum contained in
$[-2,2]$, hence norm $\le 2$. But Theorem~\ref{thm:quantum_chsh_spectrum} produces a
CHSH element in the spinor algebra with norm $2\sqrt2>2$, contradiction.
\end{proof}

\subsection{Interpretation}

From the NQA viewpoint, the Bell--CHSH contradiction does not arise from probability
theory itself, but from the \emph{additional classical requirement} of joint
definability (equivalently: commutative representation) for all settings at once.
Fine's theorem~\cite{Fine1982} identifies that requirement with the existence of a
single Kolmogorov model; Theorem~\ref{thm:commutative_chsh} isolates its algebraic
consequence ($\|S\|\le 2$); and Theorem~\ref{thm:quantum_chsh_spectrum} exhibits the
noncommutative spinor/NQA alternative in which the CHSH spectrum reaches $\pm 2\sqrt2$.

Thus, at the algebraic core, Bell--CHSH violation is the statement that the spinor
algebra realizing spin-$\tfrac12$ correlations cannot be collapsed into any single
commutative algebra without destroying the extremal spectral features of the CHSH
operator.


\paragraph{Bell--CHSH as algebraic non-embeddability.}
As a separate demonstration of the method, we revisit the Bell--CHSH scenario and isolate the
minimal algebraic mechanism behind the classical bound: \emph{commutativity forces a spectral norm
constraint \(\|S\|\le 2\)}. We then exhibit a concrete (realified) quantum CHSH element with spectrum
\(\{\pm 2\sqrt2,0,0\}\), obtaining an explicit algebraic non-embeddability statement: no embedding into
a single commutative Kolmogorov algebra can preserve the CHSH spectral features.

Finally, we explain why NQA is not merely a change of basis. The index-twisted multiplication induces
a natural \((\mathbb{Z}_2)^{2m}\)-grading with a canonical bicharacter, and therefore supports a
(color) Lie algebra via the \(\varepsilon\)-commutator and, upon coarsening, a Lie superalgebra via the
supercommutator. This immediately connects NQA to a large existing analytic apparatus and prepares the
ground for stronger structural results in subsequent work.

\paragraph{Bitwise XOR and $\mathbb{Z}_2$ dot product.}
For bitstrings $u,v\in\{0,1\}^m$ we write $u\oplus v$ for \emph{bitwise XOR}
(addition in $(\mathbb{Z}_2)^m$).
We write
\[
u\cdot v := \sum_{k=1}^m u_k v_k \pmod 2 \in \mathbb{Z}_2,
\]
and whenever $u\cdot v$ appears in an exponent, e.g.\ $(-1)^{u\cdot v}$,
we interpret it as a sign in $\{\pm1\}$ via the embedding $\mathbb{Z}_2\hookrightarrow\{0,1\}$.
For elements $g=(\alpha,\beta)\in(\mathbb{Z}_2)^{2m}$, addition is componentwise XOR:
$(\alpha,\beta)+(\alpha',\beta')=(\alpha\oplus\alpha',\,\beta\oplus\beta')$.


\section{NQA is an index-twisted $(\mathbb Z_2)^{2m}$-graded algebra}
\label{sec:index_twisted}

In this section we isolate the graded (``color'') structure that is forced by the block
multiplication rules.  The point is that once the product is defined by adding binary indices in
\(G:=(\mathbb Z_2)^{2m}\) and inserting a sign determined by a bilinear pairing, the resulting
associative algebra automatically carries (i) a natural \(G\)-grading, (ii) a canonical bicharacter
governing graded commutation signs, and (iii) a derived \(\mathbb Z_2\)-parity producing a Lie
superalgebra via the supercommutator.

\begin{definition}[Index-twisted \(G\)-graded algebra]
Fix \(m\in\mathbb N\) and set \(G:=(\mathbb Z_2)^{2m}\) \cite{Kac1977,Scheunert1979LNM}.  We write
\begin{equation}
g=(\alpha,\beta), \qquad \alpha,\beta\in(\mathbb Z_2)^m,
\label{eq:IT-01}
\end{equation}
where \(\alpha\) records the \(X\)-slots and \(\beta\) records the \(Z\)-slots.

Let \(A_m\) be the real vector space with basis
\begin{equation}
\{B(\alpha,\beta)\mid (\alpha,\beta)\in G\},
\label{eq:IT-02}
\end{equation}
equipped with the associative multiplication
\begin{equation}
B(\alpha,\beta)\,B(\alpha',\beta')
\;:=\;
(-1)^{\beta\cdot\alpha'}\,
B(\alpha\oplus_2\alpha',\;\beta\oplus_2\beta').
\label{eq:IT-03}
\end{equation}
Here \(\beta\cdot\alpha' := \sum_{i=1}^m \beta_i\alpha'_i\in\mathbb Z_2\) is the standard dot product,
and \(\oplus_2\) denotes bitwise addition modulo \(2\).

\smallskip
\noindent
\emph{Matrix realization (compatibility with the block basis).}
Using the real \(2\times2\) blocks \(I,X,Z,W\) from \eqref{eq:NQA_blocks}--\eqref{eq:NQA_relations}, set
\begin{equation}
B(\alpha,\beta)\;:=\;\bigotimes_{k=1}^m X^{\alpha_k} Z^{\beta_k}\ \in\ \mathrm{Mat}(2^m,\mathbb R).
\label{eq:IT-03b}
\end{equation}
Then \eqref{eq:IT-03} is exactly the sign produced by commuting the \(Z\)-factors of the left element
past the \(X\)-factors of the right element; in particular for \(m=1\) it reproduces
\(XZ=+W\) and \(ZX=-W\).
Equivalently, \(B(\alpha,\beta)\) is the tensor block \(B_{i_1\cdots i_m}\) with
\(i_k=I,X,Z,W\) corresponding to \((\alpha_k,\beta_k)=(0,0),(1,0),(0,1),(1,1)\).
\end{definition}

\medskip

\subsection{Color grading and the canonical bicharacter}

\begin{definition}[Color-graded associative algebra and \(\varepsilon\)-commutator]
Let \(G\) be an abelian group and let \(\varepsilon\colon G\times G\to\{\pm1\}\) be a bicharacter:
\begin{equation}
\varepsilon(g+h,k)=\varepsilon(g,k)\varepsilon(h,k),\qquad
\varepsilon(g,h+k)=\varepsilon(g,h)\varepsilon(g,k).
\label{eq:IT-04}
\end{equation}
A \emph{\((G,\varepsilon)\)-color-graded associative algebra} is a \(G\)-graded vector space
\(A=\bigoplus_{g\in G}A_g\) with associative multiplication satisfying \(A_gA_h\subseteq A_{g+h}\).

For homogeneous \(x\in A_g\), \(y\in A_h\), define the \emph{\(\varepsilon\)-commutator}
\begin{equation}
[x,y]_\varepsilon := xy-\varepsilon(g,h)\,yx.
\label{eq:IT-05}
\end{equation}
If \( [-,-]_\varepsilon \) is bilinear, graded skew-symmetric
\([x,y]_\varepsilon=-\varepsilon(g,h)[y,x]_\varepsilon\), and satisfies the \(\varepsilon\)-Jacobi identity
\begin{equation}
\varepsilon(k,g)[x,[y,z]_\varepsilon]_\varepsilon+
\varepsilon(g,h)[y,[z,x]_\varepsilon]_\varepsilon+
\varepsilon(h,k)[z,[x,y]_\varepsilon]_\varepsilon=0,
\label{eq:IT-06}
\end{equation}
then \((A,[-,-]_\varepsilon)\) is a \emph{color Lie algebra}.
\end{definition}

\begin{theorem}[Canonical color structure determined by the index twist
{\cite{Bruce2019Z2nSupersymmetry,RittenbergWyler1978GenSuperalgebras}}]
\label{thm:IT-color}
Let \(A_m\) be the index-twisted algebra from \eqref{eq:IT-03}. Declare \(B(\alpha,\beta)\) to be
homogeneous of degree
\begin{equation}
|B(\alpha,\beta)|:=(\alpha,\beta)\in G.
\label{eq:IT-07}
\end{equation}
Define the bilinear form \(\Omega\colon G\times G\to\mathbb Z_2\) by
\begin{equation}
\Omega\big((\alpha,\beta),(\alpha',\beta')\big)
:=\beta\cdot\alpha' + \beta'\cdot\alpha\quad (\mathrm{mod}\;2),
\label{eq:IT-08}
\end{equation}
and set
\begin{equation}
\varepsilon(g,h):=(-1)^{\Omega(g,h)}\in\{\pm1\}.
\label{eq:IT-09}
\end{equation}
Then:
\begin{enumerate}
\item \(A_m\) is a \(G\)-graded associative algebra with one-dimensional homogeneous components
\begin{equation}
A_m=\bigoplus_{g\in G}(A_m)_g,\qquad (A_m)_{(\alpha,\beta)}:=\mathbb R\,B(\alpha,\beta),
\label{eq:IT-10}
\end{equation}
and \((A_m)_g(A_m)_h\subseteq (A_m)_{g+h}\).

\item \(\varepsilon\) is a bicharacter on \(G\).

\item The \(\varepsilon\)-commutator
\begin{equation}
[X,Y]_\varepsilon := XY-\varepsilon(g,h)\,YX,
\qquad X\in(A_m)_g,\ Y\in(A_m)_h,
\label{eq:IT-11}
\end{equation}
turns \(A_m\) into a \((G,\varepsilon)\)-color Lie algebra.
\end{enumerate}
\end{theorem}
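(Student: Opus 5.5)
The plan is to verify the three items in order, using only the twisted product law \eqref{eq:IT-03} and the bilinearity of the dot product over $\mathbb Z_2$. I will argue at the level of abstract index calculus rather than through the matrix realization \eqref{eq:IT-03b}. The matrix realization only serves as a cross-check that the abstract $A_m$ is the NQA block algebra, and hence associative.

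\textbf{Item (1): grading.} First I check associativity directly. Compute
\[
\bigl(B(g)B(h)\bigr)B(k) \quad\text{and}\quad B(g)\bigl(B(h)B(k)\bigr)
\]
for $g=(\alpha,\beta)$, $h=(\alpha',\beta')$, $k=(\alpha'',\beta'')$. Both have index $g+h+k$. The signs are
\[
(-1)^{\beta\cdot\alpha'+(\beta\oplus\beta')\cdot\alpha''}
\quad\text{and}\quad
(-1)^{\beta'\cdot\alpha''+\beta\cdot(\alpha'\oplus\alpha'')}.
\]
These agree by bilinearity of the dot product, since the sign function $c(g,h)=(-1)^{\beta\cdot\alpha'}$ is a bicharacter and hence a $2$-cocycle. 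Next, each $B(\alpha,\beta)$ spans a one-dimensional subspace, and these $4^m$ lines give a direct sum decomposition because $\{B(\alpha,\beta)\}$ is a basis. The inclusion $(A_m)_g(A_m)_h\subseteq(A_m)_{g+h}$ is then read off from \eqref{eq:IT-03}.

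\textbf{Item (2): bicharacter.} The form $\Omega$ is a sum of two $\mathbb Z_2$-bilinear expressions, so it is biadditive. Composing with the character $\mathbb Z_2\to\{\pm1\}$ gives the two identities in \eqref{eq:IT-04}. I will also record two facts. First, $\Omega$ is symmetric, so
\[
\varepsilon(g,h)\,\varepsilon(h,g)=1.
\]
Second, $\Omega(g,g)=2\,\beta\cdot\alpha=0$, so $\varepsilon(g,g)=1$.

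\textbf{Item (3): color Lie algebra.} I take the general route of Scheunert. For any $G$-graded associative algebra and any bicharacter $\varepsilon$ satisfying $\varepsilon(g,h)\varepsilon(h,g)=1$, the $\varepsilon$-commutator is bilinear, which follows by extending linearly from homogeneous elements. Graded skew-symmetry is then one line:
\[
[y,x]_\varepsilon=yx-\varepsilon(h,g)\,xy=-\varepsilon(h,g)\,[x,y]_\varepsilon ,
\]
and $\varepsilon(h,g)=\varepsilon(g,h)^{-1}$. For the $\varepsilon$-Jacobi identity \eqref{eq:IT-06}, I expand each double bracket into four associative monomials. Using bimultiplicativity to rewrite the coefficients, the twelve terms cancel in pairs. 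This is the step with the most bookkeeping and is where I expect the only real effort.

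The obstacle can be bypassed in our specific case. From \eqref{eq:IT-03},
\[
B(h)B(g)=(-1)^{\beta'\cdot\alpha}\,B(g+h),
\qquad\text{so}\qquad
B(g)B(h)=(-1)^{\Omega(g,h)}\,B(h)B(g)=\varepsilon(g,h)\,B(h)B(g).
\]
Hence $[B(g),B(h)]_\varepsilon=0$ for all basis elements. Since the homogeneous components are one-dimensional, the $\varepsilon$-commutator vanishes on all homogeneous pairs, and so skew-symmetry and Jacobi hold trivially. I would state this explicitly in the proof: $\varepsilon$ is exactly the commutation factor of the twisted product. Consequently $A_m$ is $\varepsilon$-commutative, and the resulting color Lie algebra is $\varepsilon$-abelian. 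The nontrivial content of the theorem lies in items (1) and (2) and in the identification of $\varepsilon$ with the commutation signs.
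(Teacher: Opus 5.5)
Your proof is correct. For items (1) and (2) it matches the paper: degrees add under \eqref{eq:IT-03}, and $\Omega$ is $\mathbb Z_2$-bilinear and symmetric, so $\varepsilon$ is a bicharacter with $\varepsilon(g,h)\varepsilon(h,g)=1$. You also check associativity through the cocycle identity for $(-1)^{\beta\cdot\alpha'}$, which the paper simply builds into the definition of $A_m$.

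For item (3) the paper takes only the general route you sketch first. It cites the standard fact that in any $G$-graded associative algebra with a commutation factor, the $\varepsilon$-Jacobi identity follows by expanding the nested brackets and cancelling terms pairwise. The paper asserts this cancellation rather than carrying it out.

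Your bypass is a genuinely different and sharper argument. By computing $B(g)B(h)=\varepsilon(g,h)\,B(h)B(g)$ directly from the twist, you show that $\varepsilon$ is exactly the commutation bicharacter of the twisted group algebra. Hence $[\cdot,\cdot]_\varepsilon$ vanishes identically, and skew-symmetry and Jacobi hold as $0=0$.

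Your route avoids the twelve-term bookkeeping entirely. It also shows that the color Lie algebra in item (3) is $\varepsilon$-abelian; already for $m=1$, $[X,Z]_\varepsilon=XZ+ZX=0$. So the substantive content of the ``canonical color structure'' is the $\varepsilon$-commutativity of $A_m$, not a nontrivial bracket.

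The paper's general route buys generality instead. It uses neither the one-dimensionality of the homogeneous components nor the particular $\varepsilon$. It therefore applies verbatim to any other commutation factor on the same $G$-grading, where the bracket does not vanish. Examples are the trivial bicharacter, which gives the ordinary commutator, and $(-1)^{p(g)p(h)}$, which gives the supercommutator of Theorem~\ref{thm:IT-Lie-super}.
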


\begin{proof}
(1) From \eqref{eq:IT-03},
\[
B(\alpha,\beta)B(\alpha',\beta')
= (-1)^{\beta\cdot\alpha'}\,B(\alpha\oplus_2\alpha',\beta\oplus_2\beta'),
\]
so degrees add in \(G\):
\[
|B(\alpha,\beta)B(\alpha',\beta')|
=(\alpha\oplus_2\alpha',\beta\oplus_2\beta')=(\alpha,\beta)+(\alpha',\beta').
\]
Hence \((A_m)_g(A_m)_h\subseteq (A_m)_{g+h}\).

(2) The form \(\Omega\) is bilinear over \(\mathbb Z_2\), hence
\(\Omega(g+h,k)=\Omega(g,k)+\Omega(h,k)\) and \(\Omega(g,h+k)=\Omega(g,h)+\Omega(g,k)\).
Exponentiating gives \eqref{eq:IT-04} for \(\varepsilon(g,h)=(-1)^{\Omega(g,h)}\).

(3) Bilinearity is immediate. Since \(\Omega(g,h)=\Omega(h,g)\) over \(\mathbb Z_2\),
we have \(\varepsilon(g,h)=\varepsilon(h,g)\) and \(\varepsilon(g,h)\varepsilon(h,g)=1\), hence
\([X,Y]_\varepsilon=-\varepsilon(g,h)[Y,X]_\varepsilon\) for homogeneous elements.
The \(\varepsilon\)-Jacobi identity is the standard consequence of associativity together with the
bicharacter property \eqref{eq:IT-04}: expanding each iterated bracket into products, all terms cancel
pairwise.
\end{proof}

\medskip

\subsection{A derived \texorpdfstring{$\mathbb Z_2$}{Z2}-parity and Lie superalgebra}

The \((\mathbb Z_2)^{2m}\)-grading is fine: it records the full binary label \((\alpha,\beta)\).
Often one wants a coarser \(\mathbb Z_2\)-grading (parity) obtained by composing the degree map
\(|\cdot|\colon A_m\to G\) with a homomorphism \(p\colon G\to\mathbb Z_2\).

\begin{definition}[Parity induced by Hamming weights]
Define \(p\colon G\to\mathbb Z_2\) by
\begin{equation}
p(\alpha,\beta):=\big(|\alpha|+|\beta|\big)\bmod 2,
\label{eq:IT-12}
\end{equation}
where \(|\alpha|=\sum_{i=1}^m\alpha_i\) is the Hamming weight. Declare
\begin{equation}
\deg B(\alpha,\beta):=p(\alpha,\beta)\in\{0,1\},
\label{eq:IT-13}
\end{equation}
and extend by linearity to all of \(A_m\). Set
\begin{equation}
A_m^{\bar0}:=\mathrm{span}\{B(\alpha,\beta)\mid p(\alpha,\beta)=0\},\qquad
A_m^{\bar1}:=\mathrm{span}\{B(\alpha,\beta)\mid p(\alpha,\beta)=1\},
\label{eq:IT-14}
\end{equation}
so that \(A_m=A_m^{\bar0}\oplus_2 A_m^{\bar1}\).
\end{definition}

\begin{theorem}[Lie superalgebra via the supercommutator
{\cite{Scheunert1979GeneralizedLie,Varadarajan2004Supersymmetry}}]
\label{thm:IT-Lie-super}
With the \(\mathbb Z_2\)-grading \(A_m=A_m^{\bar0}\oplus_2 A_m^{\bar1}\), the supercommutator
\begin{equation}
[X,Y]_s := XY-(-1)^{\deg X\,\deg Y}YX
\qquad (X,Y\ \text{homogeneous})
\label{eq:IT-15}
\end{equation}
defines a Lie superalgebra structure on \(A_m\). In particular:
\begin{enumerate}
\item \(\deg[X,Y]_s=\deg X+\deg Y\ (\mathrm{mod}\;2)\);
\item \( [X,Y]_s=-(-1)^{\deg X\,\deg Y}[Y,X]_s \) for homogeneous \(X,Y\);
\item the super Jacobi identity holds for all homogeneous \(X,Y,Z\in A_m\):
\begin{multline}
(-1)^{\deg X\,\deg Z}[X,[Y,Z]_s]_s
+
(-1)^{\deg Y\,\deg X}[Y,[Z,X]_s]_s
\\
+
(-1)^{\deg Z\,\deg Y}[Z,[X,Y]_s]_s
=0.
\label{eq:IT-16}
\end{multline}
\end{enumerate}
\end{theorem}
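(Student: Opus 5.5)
The plan is to reduce the statement to the standard fact that the supercommutator on any \emph{associative superalgebra} is a Lie superalgebra bracket. So the only input specific to \(A_m\) is that the parity \eqref{eq:IT-13} is compatible with the multiplication \eqref{eq:IT-03}.

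\textbf{Step 1 (parity is multiplicative).} The map \(p\) in \eqref{eq:IT-12} is a group homomorphism \(G\to\mathbb Z_2\), because Hamming weight mod \(2\) is additive under XOR: \(|\alpha\oplus_2\alpha'|\equiv|\alpha|+|\alpha'|\pmod 2\). By Theorem~\ref{thm:IT-color}(1), \(B(\alpha,\beta)B(\alpha',\beta')\) is a scalar multiple of \(B\) at degree \((\alpha,\beta)+(\alpha',\beta')\). Hence \(p\) of the product equals \(p(\alpha,\beta)+p(\alpha',\beta')\). By bilinearity, \(A_m^{\bar a}A_m^{\bar b}\subseteq A_m^{\overline{a+b}}\).

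Associativity of \(A_m\) I take from the matrix realization \eqref{eq:IT-03b}. There \eqref{eq:IT-03} is the honest Kronecker product of real matrices. Alternatively, one checks directly that \((-1)^{\beta\cdot\alpha'}(-1)^{(\beta\oplus\beta')\cdot\alpha''}=(-1)^{\beta'\cdot\alpha''}(-1)^{\beta\cdot(\alpha'\oplus\alpha'')}\), both sides being \((-1)^{\beta\cdot\alpha'+\beta\cdot\alpha''+\beta'\cdot\alpha''}\). Thus \(A_m\) is an associative \(\mathbb Z_2\)-graded algebra.

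\textbf{Step 2 (items 1 and 2).} For homogeneous \(X,Y\), both \(XY\) and \(YX\) lie in \(A_m^{\overline{\deg X+\deg Y}}\) by Step 1, which gives item 1. Item 2 is a one-line computation:
\[
-(-1)^{\deg X\deg Y}[Y,X]_s=-(-1)^{\deg X\deg Y}YX+XY=[X,Y]_s.
\]
This uses \(((-1)^{\deg X\deg Y})^2=1\).

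\textbf{Step 3 (super Jacobi, main bookkeeping).} Write \(x=\deg X\), \(y=\deg Y\), \(z=\deg Z\). I will expand each of the three nested brackets in \eqref{eq:IT-16} into four monomials in \(X,Y,Z\), giving twelve terms, and use associativity to regard each as one of the six orderings \(XYZ, XZY,\dots\).

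The key point is that each ordering occurs exactly twice, with signs that are products of factors \((-1)^{xy},(-1)^{yz},(-1)^{zx}\). I will verify cancellation for the pair containing \(XYZ\) explicitly:
\begin{itemize}
\item from \((-1)^{xz}[X,[Y,Z]_s]_s\) it appears with sign \((-1)^{xz}\);
\item from \((-1)^{zy}[Z,[X,Y]_s]_s\), via the term \(-(-1)^{z(x+y)}[X,Y]_sZ\), it appears with sign \(-(-1)^{zy}(-1)^{zx+zy}=-(-1)^{xz}\).
\end{itemize}
The remaining five pairs follow by the cyclic symmetry \(X\to Y\to Z\to X\) of \eqref{eq:IT-16}.

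A shortcut I would note is that \(\varepsilon'(g,h):=(-1)^{p(g)p(h)}\) is a symmetric bicharacter on \(G\), since \(p\) is linear. Then \([\cdot,\cdot]_s\) is exactly the \(\varepsilon'\)-commutator, and the Jacobi identity is the same associativity-plus-bicharacter cancellation already invoked in Theorem~\ref{thm:IT-color}(3).

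The main obstacle is only the sign bookkeeping in this twelve-term expansion. Everything then extends from homogeneous elements to all of \(A_m\) by bilinearity.
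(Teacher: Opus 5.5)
Your proposal is correct and takes essentially the paper's route: show that the Hamming parity is additive under XOR, so that $A_m$ is an associative $\mathbb Z_2$-graded algebra, and then apply the standard supercommutator argument, which you actually carry out rather than merely cite. One small fix in Step 3: the cyclic shift $X\to Y\to Z\to X$ only carries the $XYZ$ cancellation to $YZX$ and $ZXY$, so you also need to check one ordering from the other orbit. For example, $XZY$ appears with sign $-(-1)^{xz+yz}$ in $(-1)^{xz}[X,[Y,Z]_s]_s$ and with sign $(-1)^{yz+zx}$ in $(-1)^{yx}[Y,[Z,X]_s]_s$, so it cancels, and cyclicity then handles $ZYX$ and $YXZ$.
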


\begin{proof}
Compatibility of \(p\) with multiplication follows from the index addition in \eqref{eq:IT-03}:
\[
p(\alpha\oplus_2\alpha',\beta\oplus_2\beta')
\equiv p(\alpha,\beta)+p(\alpha',\beta')\quad(\mathrm{mod}\;2),
\]
hence \(A_m^{\bar i}A_m^{\bar j}\subseteq A_m^{\overline{i+j}}\).

It is a standard fact that any \(\mathbb Z_2\)-graded associative algebra becomes a Lie superalgebra
under the supercommutator \eqref{eq:IT-15}: bilinearity and the degree rule are immediate; graded
skew-symmetry is a direct rearrangement; and the super Jacobi identity follows by expanding each
bracket into products and canceling terms using associativity and the parity exponents.
\end{proof}


\section{Conclusion}

We have introduced Natural Qubit Algebra (NQA) as a compact real operator
calculus for qubit systems, based on a $2\times2$ real block alphabet
$\{I,X,Z,W\}$ and tensor-word representations indexed by binary codes.
The central principle is that operator structure, rather than Hilbert-space
dimension alone, governs the effective complexity of many quantum constructions.

Our main contributions can be summarized as follows.

\paragraph{(1) Canonical real Clifford organization.}
We showed that the $m$-qubit block basis admits a natural
$(\mathbb Z_2)^{2m}$-grading with a bicharacter controlling commutation signs.
In the two-qubit case, this yields an explicit identification
\[
\mathrm{Mat}(4,\mathbb R)\;\cong\;\mathrm{Cl}(2,2;\mathbb R),
\]
providing a canonical real Clifford normal form for two-qubit circuits.
This makes grading and commutation signatures intrinsic rather than
presentation-dependent.

\paragraph{(2) Bell--CHSH as algebraic non-embeddability.}
Rewriting the Bell--CHSH scenario entirely in the spinor/NQA framework,
we exhibited explicit real matrix representatives of both the quantum
CHSH operator (with spectrum $\{\pm 2\sqrt2,0,0\}$) and its classical
commutative counterpart (spectrum $\{\pm 2\}$).
This allows the violation to be formulated as a precise algebraic statement:
there is no embedding of the noncommutative spinor algebra into a single
commutative Kolmogorov algebra that preserves multiplication and spectral
relations.
In this sense, the Bell bound becomes a structural property of commutative
operator algebras rather than a probabilistic paradox.

\paragraph{(3) Structured oracle representations.}
For both the Bernstein--Vazirani and Grover phase oracles,
we derived compact NQA expressions as products of local block factors.
In particular, the Grover diffusion operator
\[
D = 2|s\rangle\!\langle s| - I
\]
admits a tensor-product description whose symbolic size grows linearly in $m$
when kept in factored form.
Thus, at the level of operator syntax, both ``Clifford'' and
``non-Clifford'' oracle examples can possess comparably compact structured
descriptions.

\paragraph{(4) Spectral boundary of Clifford dynamics.}
We clarified the relation to the Gottesman--Knill theorem by isolating
the spectral mechanism that separates Clifford and non-Clifford behavior.
Clifford operations generate discrete phase spectra confined to
$\{0,\pi/2,\pi,3\pi/2\}$, while the Grover iterate produces
a genuine continuous rotation in a two-dimensional invariant subspace.
NQA does not circumvent the $\Theta(\sqrt N)$ query lower bound;
rather, it makes explicit how non-Clifford dynamics emerge from products
of reflections while retaining compact algebraic form.

\medskip

Taken together, these results suggest that the traditional
``Clifford vs.\ non-Clifford'' dichotomy can be refined by separating
three notions:
\begin{enumerate}
  \item gate-set generation,
  \item spectral phase structure,
  \item symbolic growth of operator expressions.
\end{enumerate}
NQA provides a language in which these aspects can be analyzed independently.

Finally, we emphasize that the present framework is not an isolated
formal construction.
Its intrinsic grading places it naturally within the theory of
color-graded algebras and real Clifford structures.
This connection opens the possibility of importing a substantial
algebraic toolkit into quantum operator analysis. The proposed method can allow working with real quantum systems by studying non-trivial symmetries hidden in them at the level of the mathematical carrier of quantum theory.

\bibliographystyle{unsrt}
\bibliography{references}

@article{Benioff1980,
  author  = {Benioff, Paul},
  title   = {The computer as a physical system: A microscopic quantum mechanical Hamiltonian model of computers as represented by Turing machines},
  journal = {Journal of Statistical Physics},
  volume  = {22},
  number  = {5},
  pages   = {563--591},
  year    = {1980},
  doi     = {10.1007/BF01011339}
}

@book{NielsenChuang2010,
  author    = {Nielsen, Michael A. and Chuang, Isaac L.},
  title     = {Quantum Computation and Quantum Information: 10th Anniversary Edition},
  publisher = {Cambridge University Press},
  year      = {2010},
  isbn      = {9781107002173},
  url       = {https://www.cambridge.org/9781107002173}
}

@article{Feynman1982,
  author  = {Feynman, Richard P.},
  title   = {Simulating physics with computers},
  journal = {International Journal of Theoretical Physics},
  year    = {1982},
  volume  = {21},
  number  = {6-7},
  pages   = {467--488},
  doi     = {10.1007/BF02650179}
}

@book{BengtssonZyczkowski2017,
  author    = {Bengtsson, Ingemar and {\.Z}yczkowski, Karol},
  title     = {Geometry of Quantum States: An Introduction to Quantum Entanglement},
  edition   = {2},
  publisher = {Cambridge University Press},
  year      = {2017},
  doi       = {10.1017/9781139207010},
  isbn      = {9781107026254}
}

@article{VanLoan2000,
  author  = {Van Loan, Charles F.},
  title   = {The ubiquitous {Kronecker} product},
  journal = {Journal of Computational and Applied Mathematics},
  volume  = {123},
  number  = {1--2},
  pages   = {85--100},
  year    = {2000},
  doi     = {10.1016/S0377-0427(00)00393-9}
}

@article{Pauli1927,
  author  = {Pauli, W.},
  title   = {Zur Quantenmechanik des magnetischen Elektrons},
  journal = {Zeitschrift f{\"u}r Physik},
  volume  = {43},
  number  = {9--10},
  pages   = {601--623},
  year    = {1927},
  doi     = {10.1007/BF01397326}
}

@article{Dirac1928,
  author  = {Dirac, P. A. M.},
  title   = {The Quantum Theory of the Electron},
  journal = {Proceedings of the Royal Society of London. Series A},
  volume  = {117},
  number  = {778},
  pages   = {610--624},
  year    = {1928},
  doi     = {10.1098/rspa.1928.0023}
}

@book{Porteous1995,
  author    = {Porteous, Ian R.},
  title     = {Clifford Algebras and the Classical Groups},
  publisher = {Cambridge University Press},
  year      = {1995},
  isbn      = {0521551773},
  doi       = {10.1017/CBO9780511470912}
}

@article{MarkovShi2008,
  author  = {Markov, Igor L. and Shi, Yaoyun},
  title   = {Simulating Quantum Computation by Contracting Tensor Networks},
  journal = {SIAM Journal on Computing},
  volume  = {38},
  number  = {3},
  pages   = {963--981},
  year    = {2008},
  doi     = {10.1137/050644756}
}

@article{GrayKourtis2021,
  author  = {Gray, Johnnie and Kourtis, Stefanos},
  title   = {Hyper-optimized tensor network contraction},
  journal = {Quantum},
  volume  = {5},
  pages   = {410},
  year    = {2021},
  doi     = {10.22331/q-2021-03-15-410}
}

@article{AaronsonGottesman2004,
  author  = {Aaronson, Scott and Gottesman, Daniel},
  title   = {Improved simulation of stabilizer circuits},
  journal = {Physical Review A},
  volume  = {70},
  pages   = {052328},
  year    = {2004},
  doi     = {10.1103/PhysRevA.70.052328}
}

@article{BravyiGosset2016,
  author  = {Bravyi, Sergey and Gosset, David},
  title   = {Improved Classical Simulation of Quantum Circuits Dominated by Clifford Gates},
  journal = {Physical Review Letters},
  volume  = {116},
  pages   = {250501},
  year    = {2016},
  doi     = {10.1103/PhysRevLett.116.250501}
}

@article{BravyiEtAl2019LowRank,
  author  = {Bravyi, Sergey and Browne, Dan E. and Calpin, Padraic and Campbell, Earl and Gosset, David and Howard, Mark},
  title   = {Simulation of quantum circuits by low-rank stabilizer decompositions},
  journal = {Quantum},
  volume  = {3},
  pages   = {181},
  year    = {2019},
  doi     = {10.22331/q-2019-09-02-181}
}

@article{Pashayan2015,
  author  = {Pashayan, Hakop and Wallman, Joel J. and Bartlett, Stephen D.},
  title   = {Estimating Outcome Probabilities of Quantum Circuits Using Quasiprobabilities},
  journal = {Physical Review Letters},
  volume  = {115},
  pages   = {070501},
  year    = {2015},
  doi     = {10.1103/PhysRevLett.115.070501}
}

@article{BravyiSmithSmolin2016,
  author  = {Bravyi, Sergey and Smith, Graeme and Smolin, John A.},
  title   = {Trading Classical and Quantum Computational Resources},
  journal = {Physical Review X},
  volume  = {6},
  pages   = {021043},
  year    = {2016},
  doi     = {10.1103/PhysRevX.6.021043}
}

@book{Lounesto2001,
  author    = {Lounesto, Pertti},
  title     = {Clifford Algebras and Spinors},
  edition   = {2},
  series    = {London Mathematical Society Lecture Note Series},
  volume    = {286},
  publisher = {Cambridge University Press},
  year      = {2001},
  isbn      = {9780521005517},
  doi       = {10.1017/CBO9780511526022},
  url       = {https://www.cambridge.org/core/books/clifford-algebras-and-spinors/8318F7DD5B5DE06B30BC612BB5617021}
}

@misc{Gottesman1998Heisenberg,
  author       = {Gottesman, Daniel},
  title        = {The Heisenberg Representation of Quantum Computers},
  year         = {1998},
  eprint       = {quant-ph/9807006},
  archivePrefix= {arXiv},
  doi          = {10.48550/arXiv.quant-ph/9807006}
}

@article{Vidal2003,
  author  = {Vidal, Guifr{\'e}},
  title   = {Efficient Classical Simulation of Slightly Entangled Quantum Computations},
  journal = {Physical Review Letters},
  volume  = {91},
  pages   = {147902},
  year    = {2003},
  doi     = {10.1103/PhysRevLett.91.147902}
}

@article{BernsteinVazirani1997,
  author  = {Bernstein, Ethan and Vazirani, Umesh V.},
  title   = {Quantum complexity theory},
  journal = {SIAM Journal on Computing},
  volume  = {26},
  number  = {5},
  pages   = {1411--1473},
  year    = {1997},
  doi     = {10.1137/S0097539796300921}
}

@article{BerthiaumeBrassard1994,
  author  = {Berthiaume, Andr{\'e} and Brassard, Gilles},
  title   = {Oracle Quantum Computing},
  journal = {Journal of Modern Optics},
  volume  = {41},
  number  = {12},
  pages   = {2521--2535},
  year    = {1994},
  doi     = {10.1080/09500349414552351}
}

@article{Kac1977,
  author  = {Kac, Victor G.},
  title   = {Lie superalgebras},
  journal = {Advances in Mathematics},
  volume  = {26},
  number  = {1},
  pages   = {8--96},
  year    = {1977},
  doi     = {10.1016/0001-8708(77)90017-2},
  url     = {https://www.sciencedirect.com/science/article/pii/0001870877900172}
}

@book{Scheunert1979LNM,
  author    = {Scheunert, Manfred},
  title     = {The Theory of Lie Superalgebras: An Introduction},
  series    = {Lecture Notes in Mathematics},
  volume    = {716},
  publisher = {Springer},
  address   = {Berlin, Heidelberg},
  year      = {1979},
  doi       = {10.1007/BFb0070929},
  url       = {https://link.springer.com/book/10.1007/BFb0070929},
  isbn      = {978-3-540-09256-8}
}

@article{Scheunert1979GeneralizedLie,
  author  = {Scheunert, Manfred},
  title   = {Generalized Lie algebras},
  journal = {Journal of Mathematical Physics},
  volume  = {20},
  pages   = {712--720},
  year    = {1979},
  doi     = {10.1063/1.524113},
  url     = {https://doi.org/10.1063/1.524113}
}

@article{RittenbergWyler1978GenSuperalgebras,
  author  = {Rittenberg, V. and Wyler, D.},
  title   = {Generalized superalgebras},
  journal = {Nuclear Physics B},
  volume  = {139},
  pages   = {189--202},
  year    = {1978},
  doi     = {10.1016/0550-3213(78)90186-4},
  url     = {https://doi.org/10.1016/0550-3213(78)90186-4}
}

@article{Bruce2019Z2nSupersymmetry,
  author  = {Bruce, Andrew James},
  title   = {On a $\mathbb{Z}_2^n$-Graded Version of Supersymmetry},
  journal = {Symmetry},
  volume  = {11},
  number  = {1},
  pages   = {116},
  year    = {2019},
  doi     = {10.3390/sym11010116},
  url     = {https://www.mdpi.com/2073-8994/11/1/116}
}

@book{Varadarajan2004Supersymmetry,
  author    = {Varadarajan, V. S.},
  title     = {Supersymmetry for Mathematicians: An Introduction},
  series    = {Courant Lecture Notes in Mathematics},
  volume    = {11},
  publisher = {American Mathematical Society},
  year      = {2004},
  isbn      = {978-0-8218-3574-6},
  url       = {https://bookstore.ams.org/cln-11}
}

@article{Tsirelson1980,
  author  = {Tsirelson, Boris S.},
  title   = {Quantum generalizations of Bell's inequality},
  journal = {Letters in Mathematical Physics},
  volume  = {4},
  number  = {2},
  pages   = {93--100},
  year    = {1980},
  doi     = {10.1007/BF00417500}
}

@article{Aspect1982,
  author  = {Aspect, Alain and Dalibard, Jean and Roger, G{\'e}rard},
  title   = {Experimental test of {Bell}'s inequalities using time-varying analyzers},
  journal = {Physical Review Letters},
  volume  = {49},
  number  = {25},
  pages   = {1804--1807},
  year    = {1982},
  doi     = {10.1103/PhysRevLett.49.1804}
}

@article{Hensen2015,
  author  = {Hensen, Bas and Bernien, Hannes and Dr{\'e}au, Ana{\"i}s and Reiserer, Andreas and Kalb, Norbert and Blok, Machiel S. and Ruitenberg, Jeroen and Vermeulen, Raymond F. L. and Schouten, Raymond N. and Abell{\'a}n, Carlos and Amaya, W. and Pruneri, Valerio and Mitchell, Morgan W. and Markham, Matthew and Twitchen, Daniel J. and Elkouss, David and Wehner, Stephanie and Hanson, Ronald},
  title   = {Loophole-free Bell inequality violation using electron spins separated by 1.3 kilometres},
  journal = {Nature},
  volume  = {526},
  number  = {7575},
  pages   = {682--686},
  year    = {2015},
  doi     = {10.1038/nature15759}
}

@article{Giustina2015,
  author  = {Giustina, Marissa and Versteegh, Marijn A. M. and Wengerowsky, Simon and Handsteiner, Johannes and Hochrainer, Armin and Phelan, Kevin and Steinlechner, Fabian and Kofler, Johannes and Larsson, Jan-\AA ke and Abell{\'a}n, Carlos and Amaya, Waldimar and Pruneri, Valerio and Mitchell, Morgan W. and Beyer, J{\"o}rn and Gerrits, Thomas and Lita, Adriana E. and Shalm, Lynden K. and Nam, Sae Woo and Zeilinger, Anton},
  title   = {Significant-loophole-free test of {Bell}'s theorem with entangled photons},
  journal = {Physical Review Letters},
  volume  = {115},
  number  = {25},
  pages   = {250401},
  year    = {2015},
  doi     = {10.1103/PhysRevLett.115.250401}
}

@article{Shalm2015,
  author  = {Shalm, Lynden K. and Meyer-Scott, Evan and Christensen, Bradley G. and Bierhorst, Peter and Wayne, Michael A. and Stevens, Martin J. and Gerrits, Thomas and Glancy, Scott and Hamel, David R. and Allman, Michael S. and Coakley, Kevin J. and Dyer, Sean D. and Hodge, Connor and Lita, Adriana E. and Verma, Varun B. and Lambrocco, Craig and Tortorici, Evan and Migdall, Alan L. and Zhang, Yanbao and Kumor, Daniel R. and Farr, Will and Marsili, Francesco and Shaw, Matthew D. and Stern, Jeffrey A. and Abell{\'a}n, Carlos and Amaya, Waldimar and Pruneri, Valerio and Mitchell, Morgan W. and Knill, Emanuel and Nam, Sae Woo},
  title   = {Strong loophole-free test of local realism},
  journal = {Physical Review Letters},
  volume  = {115},
  number  = {25},
  pages   = {250402},
  year    = {2015},
  doi     = {10.1103/PhysRevLett.115.250402}
}

@article{Fine1982,
  author  = {Fine, Arthur},
  title   = {Hidden variables, joint probability, and the {Bell} inequalities},
  journal = {Physical Review Letters},
  volume  = {48},
  number  = {5},
  pages   = {291--295},
  year    = {1982},
  doi     = {10.1103/PhysRevLett.48.291}
}

@article{DehaeneDeMoor2003Clifford,
  author       = {Dehaene, Jeroen and De Moor, Bart},
  title        = {The Clifford group, stabilizer states, and linear and quadratic operations over GF(2)},
  journal      = {Physical Review A},
  volume       = {68},
  pages        = {042318},
  year         = {2003},
  doi          = {10.1103/PhysRevA.68.042318},
  eprint       = {quant-ph/0304125},
  archivePrefix= {arXiv}
}

@article{Kornyak2011FiniteGroups,
  author       = {Kornyak, Vladimir V.},
  title        = {Computations in finite groups and quantum physics},
  year         = {2011},
  eprint       = {1106.2759},
  archivePrefix= {arXiv},
  primaryClass = {math-ph}
}

@inproceedings{RaptisRaptis2025GlobalMaps,
  author       = {Raptis, Theophanes and Raptis, Vasilios},
  title        = {Symmetries and scale invariance in global maps of quantum circuits},
  booktitle    = {Proceedings},
  volume       = {123},
  number       = {5},
  year         = {2025},
  doi          = {10.3390/proceedings2025123005}
}

@article{Raptis2019BinaryWords,
  author       = {Raptis, Theophanes},
  title        = {Encoding discrete quantum algebras as a hierarchy of binary words},
  journal      = {Journal of Physics: Conference Series},
  volume       = {1251},
  number       = {1},
  pages        = {012041},
  year         = {2019},
  doi          = {10.1088/1742-6596/1251/1/012041}
}

\end{document}